\documentclass[letterpaper]{article} 
\usepackage{aaai23}  
\usepackage{times}  
\usepackage{helvet}  
\usepackage{courier}  
\usepackage[hyphens]{url}  
\usepackage{graphicx} 
\urlstyle{rm} 
\usepackage{natbib}  
\usepackage{caption} 
\DeclareCaptionStyle{ruled}{labelfont=normalfont,labelsep=colon,strut=off} 
\frenchspacing  
\setlength{\pdfpagewidth}{8.5in}  
\setlength{\pdfpageheight}{11in}  
%
\usepackage{amsmath, amsthm,amssymb,amsfonts}
\usepackage{algorithm}
\usepackage{algorithmic}
\usepackage{mathtools}
\usepackage{thmtools}
\usepackage{thm-restate}
\usepackage{enumitem}
\usepackage[capitalise]{cleveref}
\usepackage{derivative}
\usepackage{comment}
\usepackage{xcolor}

\usepackage{caption}
\def\mypath{}
%
\usepackage{newfloat}
\usepackage{listings}
\lstset{%
	basicstyle={\footnotesize\ttfamily},
	numbers=left,numberstyle=\footnotesize,xleftmargin=2em,
	aboveskip=0pt,belowskip=0pt,%
	showstringspaces=false,tabsize=2,breaklines=true}
\floatstyle{ruled}
\newfloat{listing}{tb}{lst}{}
\floatname{listing}{Listing}

\def\O{\mathcal{O}}

\def\Pr{\mathbb{P}}
\def\Exp{\mathbb{E}}
\def\Var{\operatorname{Var}}
\def\Nats{\mathbb{N}}

\def\d{\operatorname{d}}
\def\dfp{\fp'(G^S,0)}

\newlist{compactenum}{enumerate}{3} 
\setlist[compactenum]{label=\arabic*., nosep,leftmargin=*}
\crefname{compactenumi}{Item}{Items}

\DeclareMathOperator*{\argmax}{arg\,max}

\newtheorem{theorem}{Theorem}

\makeatletter

\makeatother

\newcommand{\fit}{f}

\newcommand{\Reals}{\mathbb{R}}

\newcommand{\InNeighbors}{\operatorname{in}}

\newcommand{\InDegree}{\operatorname{d}}
\newcommand{\Configuration}{X}
\newcommand{\RandomConfiguration}{\mathcal{X}}

\newcommand{\BiasedNodes}{S}

\newcommand{\fp}{\operatorname{fp}}
\newcommand{\ft}{\operatorname{T}}

\newcommand{\Trajectory}{\mathcal{T}}

\newcommand{\Paragraph}[1]{{\smallskip\noindent\bf #1}}

\newcommand{\FitAdv}{\delta}

\newcommand{\NP}{\ensuremath{\operatorname{\mathbf{NP}}}}

\newcommand{\Weight}{w}

\newcommand\numberthis{\addtocounter{equation}{1}\tag{\theequation}}
\newcommand{\MC}{\mathcal{M}}

\def\mypath{}
\usepackage{xfrac} 

%
\pdfinfo{
/TemplateVersion (2023.1)
}

\setcounter{secnumdepth}{2} 

%
\title{Maximizing the Probability of Fixation in the Positional Voter Model}

\author {
    Petros Petsinis, 
    Andreas Pavlogiannis, 
    Panagiotis Karras 
}
\affiliations {
    Aarhus University, Aabogade 34, Aarhus, Denmark\\
    \{petsinis,pavlogiannis,panos\}@cs.au.dk
}

\setlength{\smallskipamount}{2pt plus 1pt minus 1pt}
\setlength{\textfloatsep}{8pt plus 1.0pt minus 2.0pt}
\setlength{\intextsep}{8pt plus 1.0pt minus 2.0pt}
\setlength{\abovecaptionskip}{2pt plus 1pt minus 1pt}
\setlength{\abovedisplayskip}{1pt}
\setlength{\belowdisplayskip}{2pt}
\setlength{\abovedisplayshortskip}{1pt}
\setlength{\belowdisplayshortskip}{2pt}

\begin{document}
\pagestyle{plain} 
\maketitle

\begin{abstract}
The Voter model is a well-studied stochastic process that models the invasion of a \emph{novel trait}~$A$ (e.g., a new opinion, social meme, genetic mutation, magnetic spin) in a network of individuals (agents, people, genes, particles) carrying an existing \emph{resident trait}~$B$. Individuals change traits by occasionally sampling the trait of a neighbor, while an \emph{invasion bias}~$\delta\geq 0$ expresses the stochastic preference to adopt the novel trait~$A$ over the resident trait~$B$. The strength of an invasion is measured by the probability that eventually the whole population adopts trait~$A$, i.e., the \emph{fixation probability}. In more realistic settings, however, the invasion bias is not ubiquitous, but rather manifested only in parts of the network. For instance, when modeling the spread of a social trait, the invasion bias represents \emph{localized} incentives. In this paper, we generalize the standard biased Voter model to the \emph{positional} Voter model, in which the invasion bias is effectuated only on an arbitrary subset of the network nodes, called \emph{biased nodes}. We study the ensuing optimization problem, which is, given a budget~$k$, to choose~$k$ biased nodes so as to maximize the fixation probability of a randomly occurring invasion. We show that the problem is $\NP$-hard both for finite~$\delta$ and when~$\delta \to \infty$ (strong bias), while the objective function is not submodular in either setting, indicating strong computational hardness. On the other hand, we show that, when~$\delta\to 0$ (weak bias), we can obtain a tight approximation in~$\O(n^{2\omega})$ time, where~$\omega$ is the matrix-multiplication exponent. We complement our theoretical results with an experimental evaluation of some proposed heuristics.
\end{abstract}

\section{Introduction}\label{sec:intro}

Several real-world phenomena involve the emergence and spread of novel traits in populations of interacting individuals of various kinds. For example, such phenomena may concern the propagation of new information in a social network, the sweep of a novel mutation in a genetically homogeneous population, the rise and resolution of spatial conflict, and the diffusion of atomic properties in interacting particle systems. Network science collectively studies such phenomena as \emph{diffusion processes}, whereby the system of interacting individuals is represented as a network and the corresponding process defines the (stochastic, in general) dynamics of local trait spread, from an individual to its neighbors. These dynamics can vary drastically from one application domain to another, and have been studied extensively, e.g., in the cases of influence and information cascades in social networks~\cite{Domingos2001, Kempe2003, Mossel2007, Zhang2020}, epidemic spread~\cite{Kermack1927,Newman2002}, genetic variation in structured populations~\cite{Moran1958, Lieberman2005, Pavlogiannis2018,Tkadlec2021}, and game-theoretic models of antagonistic interaction~\cite{ohtsuki2006simple,IbsenJensen2015}.

One of the most fundamental diffusion processes is the \emph{Voter model}. While introduced to study particle interactions~\cite{Liggett1985} and territorial conflict~\cite{Clifford1973}, its elegance and simplicity have rendered it applicable to a wide variety of other domains (often under other names, such as \emph{imitation updating} or \emph{death-birth dynamics}), including the spread of social traits~\cite{EvenDar2011, Castellano2009, Bhat2019, Durocher2022}, robot coordination and swarm intelligence~\cite{Talamali2021}, and evolutionary dynamics~\cite{Antal2006, ohtsuki2006simple, Hindersin2015,Allen2017,Tkadlec2020,Allen2020}. 

The voter process starts with a homogeneous population of agents (aka \emph{voters}) scattered over an interaction network and carrying a \emph{resident trait}~$B$. A \emph{novel trait}~$A$ invades the population by initially appearing on some random agent(s), and gets diffused in the network by local stochastic updates: each agent occasionally wakes up and updates its own trait by randomly sampling the trait of a neighbor. In general, the trait~$A$ is associated with an \emph{invasion bias} $\delta\geq 0$, which quantifies the stochastic preference of an agent to choose~$A$ over~$B$ while sampling the traits of its neighbors. The process eventually reaches an \emph{absorption state} (aka \emph{consensus state}~\cite{EvenDar2011}), in which~$A$ either \emph{fixates} in the population or \emph{goes extinct}. The key quantity of interest is the probability of \emph{fixation}, which depends on the network structure and the bias~$\delta$.
A key difference with standard cascade models \cite{kempe2005influential} is that the Voter process is \emph{non-progressive}, meaning that the spread of $A$ can both increase and shrink over time.
It can thus express settings such as switching of opinions.

In realistic situations, the invasion bias is not ubiquitous throughout the network, but rather present only in parts. For instance, in the spread of social traits, the invasion bias represents incentives that are naturally local to certain members of the population. Similarly, in the diffusion of magnetic spins, the invasion bias typically stands for the presence of a magnetic field that is felt locally in certain areas. In this paper, we generalize the standard Voter model to faithfully express the locality of such effects, leading to the \emph{positional Voter model}, in which the invasion bias is only present in a (arbitrary) subset of network nodes, called \emph{biased nodes}, and analyze the computational properties thereof. 

The positional Voter model gives rise to the following optimization problem:~given a budget $k$, which $k$ nodes should we bias to maximize the fixation probability of the novel trait? This problem has a natural modeling motivation. For instance, ad placement can tune consumers more receptive to viral (word-of-mouth) marketing \cite{barbieri2013topic,barbieri2014influence,Zhang2020}. Feature selection of a product can increase its appeal to strategic agents and hence maximize its dissemination~\cite{ivanov2017content}. Expert placement can strengthen the robustness of a social network under adversarial attacks~\cite{alon2015robust}. Nutrient placement can be utilized to increase the territorial spread of a certain organism in ecological networks~\cite{Brendborg2022}. In all these settings, optimization can be modeled as selecting a biased set of network nodes/agents, on which one trait has a propagation advantage over the other.

\Paragraph{Our contributions.} 
In this paper we introduce the positional Voter model, and studied the associated optimization problem.
Our main results are as follows.
\begin{compactenum}
\item We show that computing the fixation probability on undirected networks admits a fully-polynomial-time approximation scheme (FPRAS, \cref{thm:approx_fix_prob}).
\item On the negative side, we show that the optimization problem is $\NP$-hard both for finite $\delta$ (general setting)  and as $\delta\to\infty$ (strong bias, \cref{thm:np_hard}), while the objective function is not submodular in either setting (\cref{thm:non_submodularity}).
\item We show that, when the network has self-loops (capturing the effect that an individual might choose to remain to their trait), the objective function becomes submodular as~$\delta\to\infty$, hence the optimization problem can be efficiently approximated within a factor~$1-\sfrac1e$ (\cref{thm:strong_bias_approximation}).
\item We show that, as~$\delta\to 0$ (weak bias), we can obtain a tight approximation in~$\O(n^{2\omega})$ time, where~$\omega$ is the matrix-multiplication exponent (\cref{thm:weak_selection}).
\item Lastly, we propose and experimentally evaluate a number of heuristics for maximizing the fixation probability.
\end{compactenum}
Due to limited space, some proofs appear in the~\cref{sec:app}.

\section{Preliminaries}\label{sec:preliminaries}

\Paragraph{Network structures.}
We consider a population of~$n$ agents spread over the nodes of a (generally, directed and weighted) graph~$G = (V, E, \Weight)$, where~$V$ is a set of~$n$ nodes, $E \subseteq V \times V$ is a set of edges capturing interactions between the agents, and~$\Weight \colon E \to \Reals_{>0}$ is a weight function mapping each edge~$(u,v)$ to a real number~$\Weight(u,v)$ denoting the strength of interaction of~$u$ with~$v$. We denote by~$\InNeighbors(u) = \{v \in V \colon (v,u) \in E\}$ the set of incoming neighbors of node~$u$. The \emph{(in)-degree} of node~$u$ is~$\InDegree(u) = \sfrac{1}{|\InNeighbors(u)|}$. For the Voter process to be well-defined on~$G$, $G$ should be \emph{strongly connected}. In some cases, we consider \emph{undirected} and \emph{unweighted} graphs, meaning that (i)~$E$ is symmetric; and (ii)~for all~$(u,v)\in E$, $\Weight(u,v) = 1$; this setting captures networks in which interactions are bidirectional and each node~$v$ is equally influenced by each neighbor. 

\Paragraph{The standard Voter model.}
A \emph{configuration}~$\Configuration\subseteq V$ represents the set of agents that carry the trait~$A$. The trait~$A$ is associated with an \emph{invasion bias}~$\delta \geq 0$ (with $\delta = 0$ denoting no bias, or the \emph{neutral setting}). Given a configuration $\Configuration$, the \emph{influence strength} of node $v$ is defined as
\[
\fit_{\Configuration}(v)=
\begin{cases}
1+\delta, & \text{ if }v\in \Configuration\\
1, & \text{ otherwise.}
\end{cases}
\numberthis\label{eq:standard_bias}
\]
Let~$\RandomConfiguration_t \subseteq V$ be a random variable representing the configuration at time $t$. The Voter (or \emph{death-birth}) process is a discrete-time stochastic process $\{ \RandomConfiguration_t \}$, $t\geq 0$ that models the invasion of a \emph{novel trait}~$A$ on a homogeneous population of agents carrying a \emph{resident trait}~$B$, scattered over the nodes of a graph~$G = (V, E, \Weight)$. Initially, the trait~$A$ appears uniformly at random on one agent, i.e., $\Pr[\RandomConfiguration_0=\{u\}] = \sfrac1n$ for each~$u \in V$. Given the configuration~$\RandomConfiguration_t = \Configuration$ at time~$t$, the next configuration~$\RandomConfiguration_{t+1}$ at time~$t+1$ is determined by a sequence of two stochastic events:
\begin{compactenum}
\item \emph{Death:} an agent~$u$ is chosen to die (i.e., update its trait) with probability $\sfrac1n$.
\item \emph{Birth:} a neighbor~$v \in \InNeighbors(u)$ is chosen with probability
\[
\frac{\fit_{\Configuration}(v)\cdot \Weight(v,u)}{\sum_{x\in \InNeighbors(u)
}\fit_{\Configuration}(x)\cdot \Weight(x,u)}
\numberthis\label{eq:birth_prob}
\]
\end{compactenum}
In effect, the set of agents carrying~$A$ may grow or shrink at any given step. In general, we may have~$(u,u)\in E$, expressing the event that agent $u$ stays at its current trait.

\Paragraph{The positional Voter model.}
To capture settings in which the invasion bias is manifested only in parts of the population, we generalize the standard Voter model to the \emph{positional Voter model} by the following two steps:
\begin{compactenum}
\item The network structure comprises a component~$\BiasedNodes\subseteq V$, the subset of agents that are biased to the invasion of~$A$.
\item The influence strength of an agent~$v$ is now \emph{conditioned} on whether the neighbor~$u$ that~$v$ is attempting to influence is biased. Formally, we replace $\fit_{\Configuration}(v)$ with $\fit_{\Configuration}^{\BiasedNodes}(v|u)$ in~\cref{eq:birth_prob}, where:
\[
\fit_{\Configuration}^{\BiasedNodes}(v|u)=
\begin{cases}
1+\delta, & \text{ if }v\in \Configuration \text{ and } u\in \BiasedNodes\\
1, & \text{ otherwise.}
\end{cases}
\numberthis\label{eq:new_bias}
\]
\end{compactenum}
We retrieve the standard Voter model by setting~$\BiasedNodes = V$, i.e., the invasion bias is uniformly present in the population. \cref{fig:pos_voter} illustrates the process.

\begin{figure}[h]
\centering
\includegraphics[width=0.45\textwidth]{\mypath 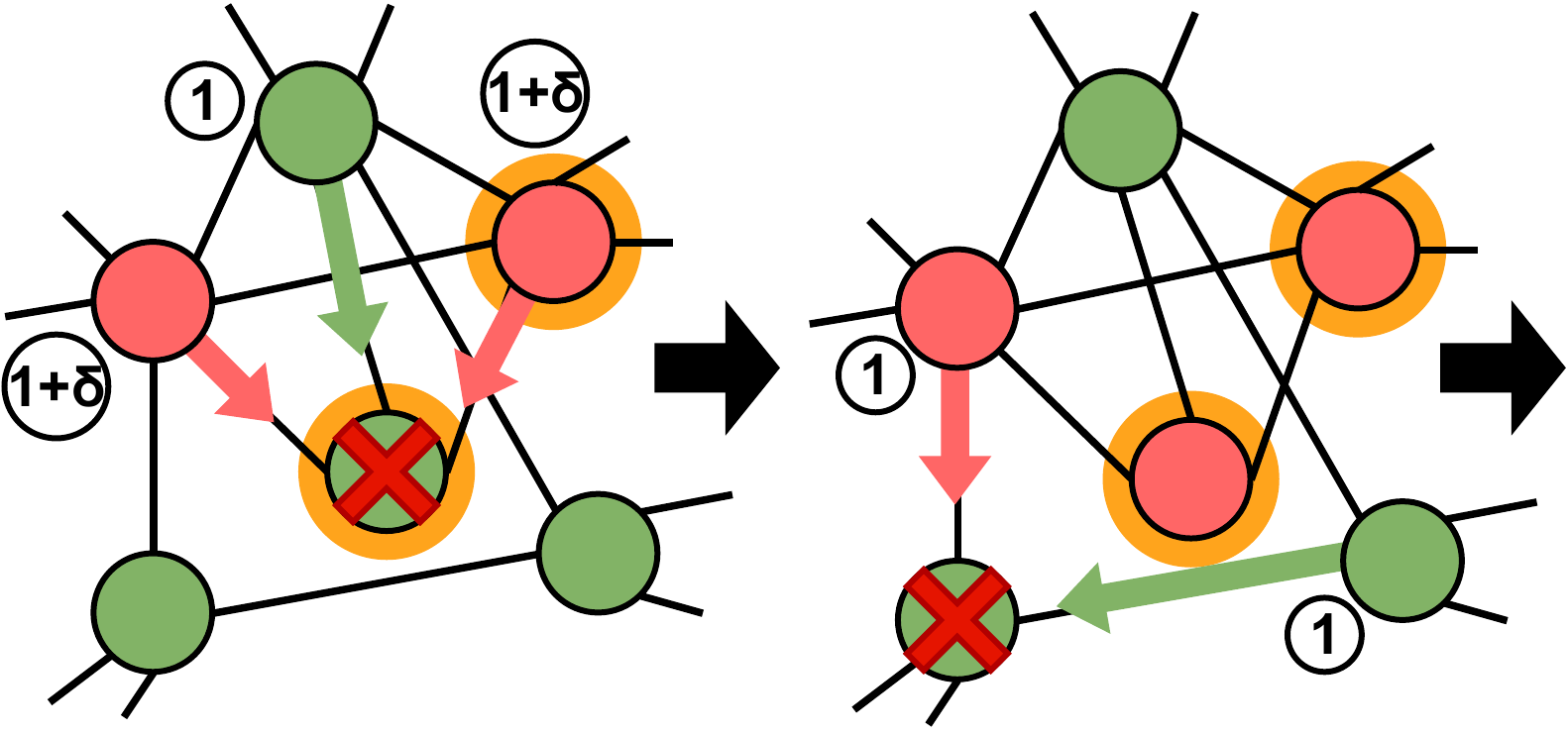}
\caption{Two steps of the positional Voter process. Red (resp., green) nodes carry the mutant trait~$A$ (resp., resident trait~$B$); orange circles mark biased nodes.}\label{fig:pos_voter}
\end{figure}

\Paragraph{Fixation.}
In the long run, the Voter process reaches a homogeneous state almost surely; that is, with probability~$1$, there exists a time~$t$ such that~$\RandomConfiguration_t \in \{\emptyset, V \}$. 
If $\RandomConfiguration_t = V$, the novel trait has \emph{fixated} in~$G$, otherwise it has \emph{gone extinct}. 
Given a configuration $\Configuration$, a biased set~$\BiasedNodes$ and a bias~$\delta$, we denote the probability that~$A$ fixates when starting from nodes~$\Configuration$ as:
\[
 \fp(G^{\BiasedNodes}, \delta, \Configuration)=\Pr[\exists t \ge 0\colon \RandomConfiguration_{t} \!=\! V \mid \RandomConfiguration_{0} \!=\! \Configuration]
\]

As the invasion starts on a random node, the \emph{fixation probability} of~$A$ in~$G$ is the average fixation probability over~$V$:
\[
\fp(G^{\BiasedNodes},\delta)=\frac{1}{n}\sum_{u\in V}\fp(G^{\BiasedNodes},\delta, \{u\})\ .
\]

When $\delta=0$, the set~$\BiasedNodes$ is inconsequential, hence the positional Voter model reduces to the standard model, for which~$\fp(G^{\BiasedNodes}, \delta) = \sfrac1n$~\cite{mcavoy2021}. 
When~$G$ is undirected and~$\BiasedNodes = V$ (i.e., in the standard Voter model), $\fp(G^{\BiasedNodes}, \FitAdv)$ admits a fully polynomial randomized approximation scheme (FPRAS) via Monte-Carlo simulations~\cite{Durocher2022}. 
For arbitrary graphs, however (that may have directed edges or non-uniform edge weights), the complexity of computing~$\fp(G^{\BiasedNodes}, \FitAdv)$ is open. 
We later show that~$\fp(G^{\BiasedNodes},\FitAdv)$ admits an FPRAS for undirected graphs even under the positional Voter model (i.e., for \emph{arbitrary}~$\BiasedNodes$).

\begin{figure}[h]
\centering
\includegraphics[width=0.45\textwidth]{\mypath 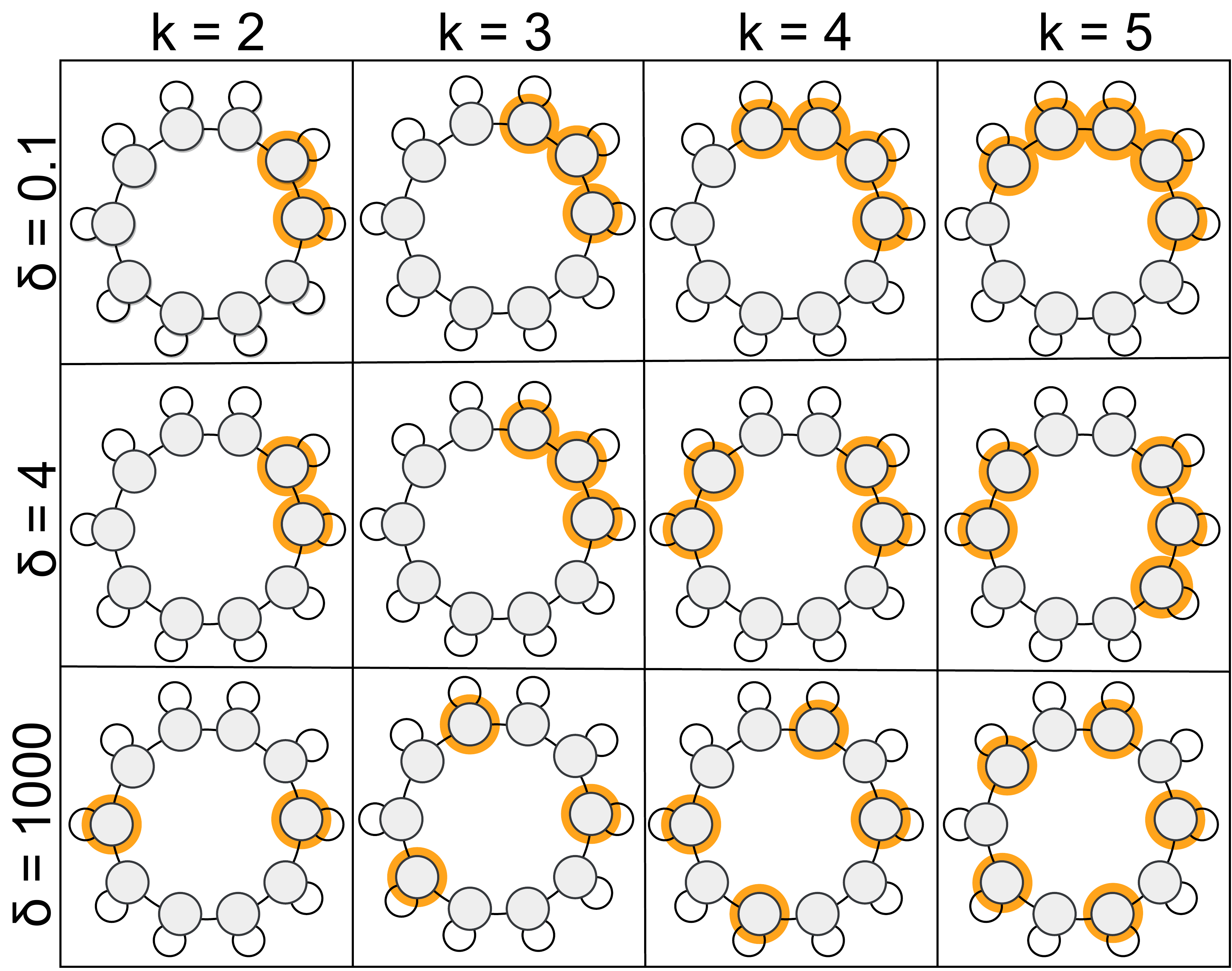}
\caption{Optimal biased node sets~$\BiasedNodes$ (in orange) on a cycle graph for different values of bias~$\delta$ and budget~$k$; for small~$\delta$ (top), the optimal biased nodes are consecutive, yet for large~$\delta$ (bottom), they are spaced apart. For intermediate values (middle), the optimal strategy varies depending on~$k$.}\label{fig:wheel_graph}
\end{figure}

\Paragraph{Optimization.}
In the positional Voter model, $\fp(G^{\BiasedNodes}, \FitAdv)$ depends not only on the network structure~$G$ and invasion bias~$\delta$, but also on the set of biased nodes~$\BiasedNodes$. As~\cref{fig:wheel_graph} illustrates, this dependency can be quite intricate, as the optimal strategy for choosing~$\BiasedNodes$ may vary depending on its size and  the value of~$\delta$. In effect, the positional Voter model naturally gives rise to the following \emph{optimization} problem: Given a budget~$k\in \Nats$, which subset~$\BiasedNodes\subseteq V$ of at most~$k$ nodes should we bias, to maximize the fixation probability? Formally, we seek a set $\BiasedNodes^*\subseteq V$ such that
\[
\BiasedNodes^*=\argmax_{\BiasedNodes\subseteq V, |\BiasedNodes|\leq k}\fp(G^{\BiasedNodes}, \delta)\ .
\numberthis\label{eq:maximization}
\]
As we will show, $\fp(G^{\BiasedNodes}, \delta)$ is monotonic on~$\BiasedNodes$ for all~$\delta$, thus the condition~$|\BiasedNodes|\leq k$ reduces to~$|\BiasedNodes|= k$.
We also consider the two extreme cases of $\delta\to \infty$ (\emph{strong bias}) and $\delta\to 0$ (\emph{weak bias}). In the case of strong bias, we define
\[
\fp^{\infty}(G^S)=\lim_{\delta\to\infty} \fp(G^S,\delta),
\]
thus the optimization objective of \cref{eq:maximization} becomes
\[
\BiasedNodes^*=\argmax_{\BiasedNodes\subseteq V, |\BiasedNodes|= k}\fp^{\infty}(G^{\BiasedNodes})\ .
\numberthis\label{eq:maximization_strong}
\]
In the case of weak bias, if $\delta=0$, we have $\fp(G^{\BiasedNodes}, 0) = \sfrac1n$, hence the fixation probability is independent of~$\BiasedNodes$. However, as~$\delta\to 0$ but remains positive, different biased sets will yield different fixation probability. In this case we work with the Taylor expansion of~$\fp(G^{\BiasedNodes},\delta)$ around~$0$, and write
\begin{align}
\fp(G^S,\FitAdv) = \frac{1}{n} + \FitAdv \cdot \dfp  +\O(\FitAdv^2),
\end{align}
where~$\dfp \!=\! \frac{\d}{\d \FitAdv} \Bigr|_{\substack{\FitAdv=0}} \fp(G^S, \FitAdv)$. As~$\delta \to 0$, the lower-order terms $\O(\delta^2)$ approach $0$ faster than the second term. Hence, for sufficiently small positive bias~$\delta$, the optimal placement of bias in the network is the one maximizing the derivative~$\dfp$. In effect, the optimization objective for weak invasion bias becomes
\[
\BiasedNodes^*=\argmax_{\BiasedNodes\subseteq V, |\BiasedNodes|= k}\dfp\ .
\numberthis\label{eq:maximization_weak}
\]
\cref{fig:random_graph} illustrates the behavior of $\fp(G^{\BiasedNodes}, \delta)$ for~$k = 2$ and various values of~$\delta$ on a small graph.

\begin{figure}[H]
\begin{tabular}{@{}c@{}}
\centering
\includegraphics[width=0.1\textwidth]{\mypath 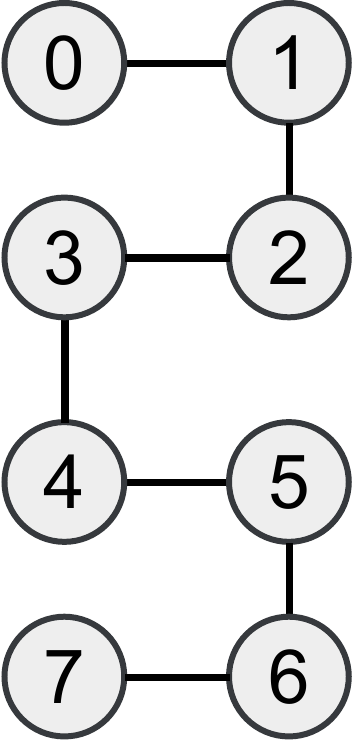}
\end{tabular}
\begin{tabular}{@{}c@{}}
\centering
\includegraphics[width=0.34\textwidth]{\mypath 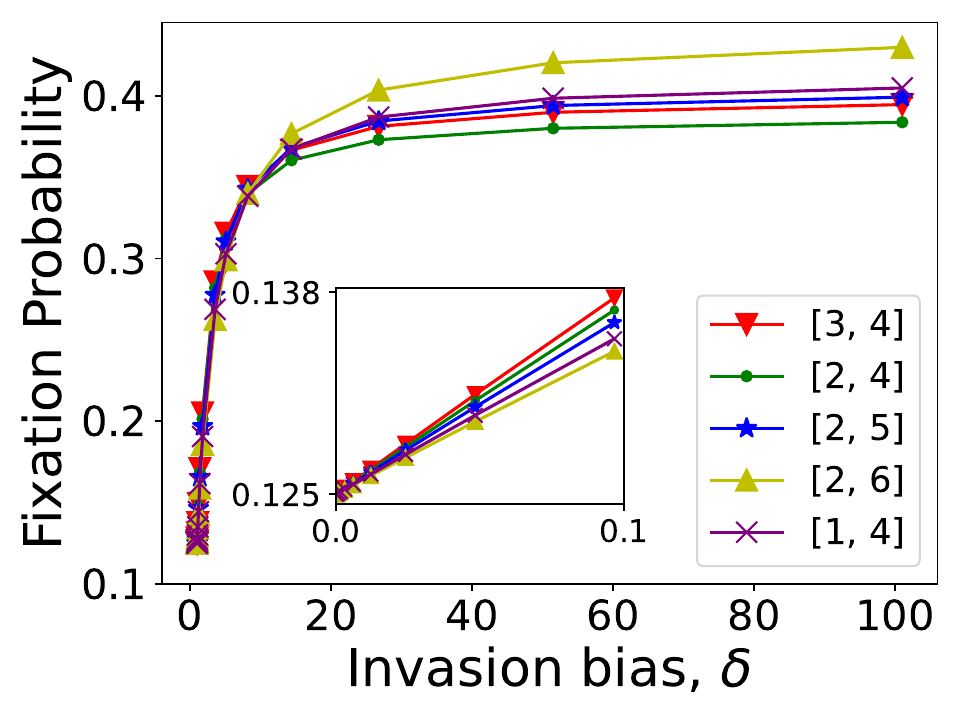} 
\end{tabular}
\caption{Fixation probabilities for different biased sets of size~$k=2$; when~$\delta \in [0, 0.1]$ the fixation probability is roughly linearly dependent on~$\delta$; the set~$\{2,6\}$ is worst for small~$\delta$ but best for large~$\delta$.}\label{fig:random_graph}
\end{figure}

\section{Computing the Fixation Probability}\label{sec:computing_fixation_probability}

In this section we compute the fixation probability~$\fp(G^S, \delta)$ for undirected~$G$. 
In a key step, we show that the expected time~$\ft(G^S,\FitAdv)$ until the positional Voter process reaches a homogeneous state is polynomial in~$n$. 
In particular, we obtain the following lemma that generalizes a similar result for the standard Voter model~\cite{Durocher2022}.

\begin{restatable}{lemma}{thmapproxfixprob}\label{thm:approx_fix_prob}
Given an undirected graph $G$ with $n$ nodes, a set $S\subseteq V$ and some $\FitAdv\geq 1$, we have $\ft(G^S,\FitAdv)\leq n^5$.
\end{restatable}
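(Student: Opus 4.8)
The plan is to bound the expected absorption time $\ft(G^S,\FitAdv)$ by tracking a suitable potential function of the configuration $\Configuration$ and showing it changes by a non-negligible amount in expectation at each step that is not yet absorbed. The natural candidate is a weighted count of the nodes carrying $A$. In the standard undirected Voter model, the degree-weighted sum $\phi(\Configuration)=\sum_{v\in \Configuration}\InDegree(v)$ (or its reciprocal, $\sum_{v\in\Configuration}|\Neighbors(v)|$) is a martingale in the neutral case, and the standard technique is to bound the expected time via the variance of its one-step increments. With bias $\FitAdv\geq 1$ restricted to the biased set $S$, $\phi$ is no longer a martingale, but it should still have a controlled drift, and the key is that each non-absorbed step has a constant-order probability of flipping some boundary node, so the potential moves by $\Omega(1/\mathrm{poly}(n))$ in squared expectation.

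**First I would** set up the boundary structure: at a configuration $\Configuration\notin\{\emptyset,V\}$, by strong connectivity there exists at least one edge $(u,v)$ with $u\in \Configuration$, $v\notin \Configuration$ (and symmetrically, since $G$ is undirected). A step changes the configuration only when a boundary node is selected for death and adopts the opposite trait; I would lower-bound the probability of such a flip. The death probability is $\sfrac1n$ for any fixed node, and conditioned on a boundary node $v$ dying, the birth probability that it adopts $A$ from a neighbor in $\Configuration$ is at least $\sfrac{1}{\,|\Neighbors(v)|(1+\FitAdv)}\geq \sfrac{1}{\,n(1+\FitAdv)}$, since each neighbor has influence strength in $[1,1+\FitAdv]$ and there are at most $n$ neighbors. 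Thus the per-step probability of \emph{some} change is at least $\Omega\!\bigl(\sfrac{1}{n^2(1+\FitAdv)}\bigr)$, and each such change moves $\phi$ by at least $1$ (one node enters or leaves $\Configuration$). This already suggests an expected-time bound that degrades with $\FitAdv$, which is why the hypothesis $\FitAdv\geq 1$ and the target exponent $n^5$ leave room.

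**The cleaner route** is the martingale/variance argument adapted to bias. I would consider the neutral potential and argue that the bias only \emph{helps} fixation (monotonicity, which the paper asserts holds for all $\FitAdv$), so the expected time to \emph{absorption} (either fixation or extinction) is dominated by the neutral expected absorption time up to a polynomial factor absorbing the $(1+\FitAdv)$ losses; since we may take $\FitAdv$ as a fixed parameter and the statement only claims $\ft\le n^5$ for $\FitAdv\ge 1$, the dependence on $\FitAdv$ must either be benign or I must argue that larger bias cannot increase absorption time beyond the neutral case. Concretely, I would bound the expected number of steps by $\Exp[\ft]\le \sfrac{\phi_{\max}^2}{\,\min\text{-squared-increment}\,}$ via the standard stopping-time estimate $\Exp[\ft]\le \max_\Configuration \sfrac{\mathrm{Var}\text{-budget}}{\,\Exp[(\Delta\phi)^2\mid \text{not absorbed}]}$, where $\phi\in[0,\sum_v\InDegree(v)]=[0,n]$ roughly, so the numerator is $O(n^2)$ and the denominator is $\Omega(\sfrac{1}{n^3})$ from the flip-probability estimate above, yielding $O(n^5)$.

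**The main obstacle** will be controlling the one-step increment of the potential in the \emph{positional} setting, where the bias acts asymmetrically and only on edges into $S$: I must verify that the chosen $\phi$ still satisfies a clean drift inequality (or is a sub/supermartingale in the right direction) despite the bias being localized to $S$ rather than uniform, and that the variance lower bound survives this asymmetry. I would handle this by checking that conditioned on non-absorption, at least one boundary flip remains available with probability $\Omega(1/\mathrm{poly}(n))$ regardless of where $S$ sits, so the squared-increment lower bound is uniform over configurations; the asymmetric bias can only reweight which boundary node flips, not eliminate the possibility of a flip, since strong connectivity guarantees a boundary edge in each direction.
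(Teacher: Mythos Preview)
Your approach is essentially the paper's: degree-weighted potential $\Phi(\Configuration)=\sum_{u\in\Configuration}\InDegree(u)$, submartingale property, variance lower bound on the one-step increment, then a drift/stopping-time estimate yielding $B^2/K$. But two of your intermediate quantities are wrong, and the way they are wrong hides the one idea that makes the bound \emph{independent of $\delta$}.

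First, the range: with $\InDegree(u)=|\InNeighbors(u)|$, the potential satisfies $0\le\Phi(\Configuration)\le 2|E|\le n^2$, not $[0,n]$. So the numerator in the drift bound is $B^2=n^4$, not $n^2$.

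Second, and this is the real gap, your squared-increment lower bound $\Omega\bigl(1/(n^2(1+\delta))\bigr)$ carries a $\delta$ factor that never goes away; your final arithmetic silently replaced $1/(n^2(1+\delta))$ by $1/n^3$, which is unjustified for unbounded $\delta$. The paper removes the $\delta$-dependence by exploiting the \emph{degree weighting} of $\Phi$ together with the direction of the flip in which bias can only help. For any boundary edge $(u,v)\in E_{AB}$ with $u\in\Configuration$, $v\notin\Configuration$, the probability that $v$ adopts $A$ is at least $\tfrac{1}{n}\cdot\tfrac{1}{\InDegree(v)}$ regardless of $\delta$ (if $v\in S$ the numerator picks up $1+\delta$ while the denominator picks up at most a factor $1+\delta$; if $v\notin S$ the bias is absent). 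That event changes $\Phi$ by $\InDegree(v)$, so the contribution to $\Exp[\Delta_t^2]$ is at least $\InDegree(v)^2/(n\,\InDegree(v))=\InDegree(v)/n\ge 1/n$, uniformly in $\delta$. With $K=1/n$ and $B=n^2$ you get $n^4/(1/n)=n^5$. Your estimate ``flip probability $\ge 1/(n^2(1+\delta))$, change $\ge 1$'' throws away exactly the cancellation that the degree weighting provides.

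Finally, you correctly flag the submartingale property as an obstacle, but it does need to be checked rather than waved through via monotonicity. The paper's verification is the same per-edge trick: for each $(u,v)\in E_{AB}$ one has $\Pr_{u\to v}\,\InDegree(v)\ge 1/n$ (bias can only help $A$ spreading into $v$) and $\Pr_{v\to u}\,\InDegree(u)\le 1/n$ (bias can only depress $B$ spreading into $u$), hence $\Exp[\Delta_t]\ge 0$. This is short, but it is the step where the positional restriction of the bias to $S$ is actually handled, and your proposal does not yet contain it.
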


In effect, we obtain an FPRAS by simulating the process sufficiently many times, and reporting the empirical average number of fixation occurrences as the fixation probability. We thus arrive at the following theorem.

\begin{theorem}\label{thm:approx_fix_prob}
Given a connected undirected graph $G=(V,E)$, a set~$\BiasedNodes\subseteq V$ and a real number~$\FitAdv\geq 0$, the function~$\fp(G^{\BiasedNodes},\FitAdv)$ admits a FPRAS.
\end{theorem}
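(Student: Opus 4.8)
The plan is to approximate $\fp(G^{\BiasedNodes},\FitAdv)$ by direct Monte-Carlo simulation of the positional Voter process and to certify that the resulting estimator is an FPRAS via a multiplicative Chernoff bound, using the polynomial expected-absorption-time bound $\ft(G^{\BiasedNodes},\FitAdv)\le n^5$ established in the preceding lemma. Concretely, given a target accuracy $\eps\in(0,1)$ and failure probability $\eta\in(0,1)$, I would draw $N$ independent samples: in the $i$-th sample I pick a start node $u_i$ uniformly at random, simulate the chain $\{\RandomConfiguration_t\}$ from $\RandomConfiguration_0=\{u_i\}$ until it hits $\emptyset$ or $V$, and set the indicator $Z_i=1$ exactly when it hits $V$. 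Each $Z_i$ is then a Bernoulli variable with mean $p:=\fp(G^{\BiasedNodes},\FitAdv)$, so $\hat p := \tfrac1N\sum_{i=1}^N Z_i$ is unbiased, and it only remains to fix $N$ and to bound the per-sample cost.

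The sample complexity hinges on a lower bound $p\ge \sfrac1n$. I would obtain this from a monotone coupling of the biased process against the neutral one: coupling the two chains so that the $A$-set under bias $\FitAdv\ge 0$ always contains the $A$-set of the $\FitAdv=0$ chain shows $\fp(G^{\BiasedNodes},\FitAdv)\ge \fp(G^{\BiasedNodes},0)=\sfrac1n$, where the last equality is the known neutral value. With $p\ge \sfrac1n$, the multiplicative Chernoff bound guarantees $\Pr[\,|\hat p-p|>\eps p\,]\le \eta$ as soon as $N=\bigO\!\left(\tfrac{1}{\eps^2 p}\log\tfrac1\eta\right)=\bigO\!\left(\tfrac{n}{\eps^2}\log\tfrac1\eta\right)$, which is polynomial in $n$, $\sfrac1\eps$, and $\log\sfrac1\eta$.

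For the per-sample running time, a single simulation could in principle run long, so I would truncate each run after $T_{\max}$ steps and set $Z_i=0$ on any run that has not yet absorbed. Markov's inequality applied to $\ft(G^{\BiasedNodes},\FitAdv)\le n^5$ gives $\Pr[\text{a run is not absorbed by }T_{\max}]\le n^5/T_{\max}$, so choosing $T_{\max}=\bigO(n^5 N/\eta)$ makes the probability that \emph{any} of the $N$ runs is truncated at most $\eta$. Conditioned on the event that no run is truncated, $\hat p$ coincides with the untruncated estimator analysed above, so a union bound over the two failure events (Chernoff deviation and truncation) yields the required guarantee after rescaling $\eta$ by a constant. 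Since each step is simulable in $\bigO(n)$ time (sampling the dying node, then the birthing neighbor by iterating over its in-neighbors weighted by \cref{eq:new_bias}) and there are at most $N\cdot T_{\max}$ steps overall, the total running time is polynomial in $n$, $\sfrac1\eps$, and $\log\sfrac1\eta$; the degenerate case $\FitAdv=0$ needs no simulation, as $\sfrac1n$ is returned exactly.

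The step I expect to be the main obstacle is the range of $\FitAdv$: the absorption-time bound is stated for $\FitAdv\ge 1$, whereas the theorem asks for all $\FitAdv\ge 0$, so I must supply a polynomial bound on $\ft(G^{\BiasedNodes},\FitAdv)$ for $\FitAdv\in(0,1)$ as well. I would address this by comparison with the neutral process, whose consensus time on a connected undirected graph is classically polynomial, and argue that introducing a sub-unit bias does not blow up the expected absorption time — the delicate point being that the bias accelerates reaching $V$ but could in principle delay reaching $\emptyset$, so a careful coupling (rather than naive stochastic domination of the $A$-set) is needed to control the time to hit \emph{either} absorbing state. Once such a uniform polynomial absorption-time bound is in hand, the Chernoff-plus-truncation argument above goes through verbatim and establishes the FPRAS.
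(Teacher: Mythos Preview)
Your approach is exactly the paper's: Monte-Carlo simulation with sample complexity controlled by a multiplicative Chernoff bound (using the lower bound $\fp(G^{\BiasedNodes},\FitAdv)\ge\sfrac1n$ from monotonicity over $\FitAdv$), and per-sample cost controlled by the polynomial expected-absorption-time lemma. The paper simply invokes the lemma and states the FPRAS conclusion; your write-up is more careful about the truncation and union-bound details, but there is no substantive difference.

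On the obstacle you flag: the restriction $\FitAdv\ge 1$ in the lemma statement is a slip (likely a residual ``fitness $r=1+\FitAdv\ge 1$'' convention). If you inspect the paper's proof of the absorption-time lemma in the appendix, the two inequalities driving $\Exp[\Delta_t]\ge 0$---namely $\InDegree(u)\le\sum_{x\in\InNeighbors(u)}\fit_{\Configuration}^{\BiasedNodes}(x|u)$ and $\fit_{\Configuration}^{\BiasedNodes}(u|v)\,\InDegree(v)\ge\sum_{x\in\InNeighbors(v)}\fit_{\Configuration}^{\BiasedNodes}(x|v)$---hold for every $\FitAdv\ge 0$, and the variance lower bound $\Var(\Delta_t)\ge\sfrac1n$ uses only that a $B$-neighbor can replace an $A$-node with probability at least $\tfrac{1}{n\,\InDegree(u)}$, which is also independent of $\FitAdv$. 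So $\ft(G^{\BiasedNodes},\FitAdv)\le n^5$ holds for all $\FitAdv\ge 0$, and the delicate coupling argument you anticipate for $\FitAdv\in(0,1)$ is unnecessary.
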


\section{Hardness of Optimization}

We now turn our attention to the optimization problem for the positional Voter model, and show that it is $\NP$-hard in general. We first examine the process with strong bias, $\delta\to\infty$, running on an undirected, regular graph~$G$ where each node has a self-loop. Our first observation is that, due to self-loops, if the process reaches a configuration~$\Configuration$ with~$\Configuration \cap \BiasedNodes \neq \emptyset$, then fixation is guaranteed.

\begin{restatable}{lemma}{lemstrongbiasfixationcondition}\label{lem:strong_bias_fixation_condition}
Consider an undirected graph~$G$ with self-loops and biased set~$\BiasedNodes$, and let~$\Configuration$ be an arbitrary configuration. If~$\Configuration \cap \BiasedNodes \neq \emptyset$, then~$\fp^{\infty}(G^{\BiasedNodes}, \Configuration) = 1$.
\end{restatable}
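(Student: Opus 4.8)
The plan is to show that, in the strong-bias limit, extinction becomes impossible because a biased node that carries~$A$ is permanently protected by its self-loop. Fix a starting configuration~$\Configuration$ with~$\Configuration \cap \BiasedNodes \neq \emptyset$ and pick a witness~$u^\ast \in \Configuration \cap \BiasedNodes$. I would work at finite bias~$\delta \ge 1$ and let~$\delta \to \infty$ only at the end. The first step is to quantify the self-loop protection. Call a \emph{reversion} the event that the node~$u$ chosen to die is biased, currently carries~$A$ (i.e.~$u \in \RandomConfiguration_t \cap \BiasedNodes$), and adopts trait~$B$. Since~$(u,u) \in E$ and~$u \in \RandomConfiguration_t$, the self-loop contributes an in-neighbor of influence strength~$\fit_{\RandomConfiguration_t}^{\BiasedNodes}(u \mid u) = 1+\delta$ to the denominator of~\cref{eq:birth_prob}. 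Bounding the numerator (the total weight of~$B$-carrying in-neighbors) by the total in-weight~$W_u = \sum_{v \in \InNeighbors(u)} \Weight(v,u)$ and the denominator from below by~$(1+\delta)\Weight(u,u)$, the probability that~$u$ reverts is at most~$\rho_\delta := \max_{u \in \BiasedNodes} \frac{W_u}{(1+\delta)\Weight(u,u)} = \bigO(1/\delta)$. Hence, for every step and every history, the conditional probability that \emph{some} biased $A$-node reverts is at most~$\tfrac1n |\BiasedNodes|\, \rho_\delta \le \rho_\delta$.

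Next I would argue that extinction forces a reversion. A biased node can leave the $A$-set only by being selected and adopting~$B$, that is, through a reversion; so if no reversion ever occurs, then~$u^\ast$ remains in~$\RandomConfiguration_t$ for all~$t$, whence~$\RandomConfiguration_t \neq \emptyset$ and the process cannot go extinct. Writing~$N$ for the number of reversions before the (almost surely finite) absorption time~$\tau$, the extinction event is therefore contained in~$\{N \ge 1\}$. To bound~$\Exp[N]$, I would combine the per-step bound~$\rho_\delta$ with the \emph{uniform} polynomial bound on the expected absorption time established earlier (\cref{thm:approx_fix_prob}), which gives~$\ft(G^{\BiasedNodes},\delta) = \Exp[\tau] \le n^5$ for all~$\delta \ge 1$; note this applies since~$G$ is undirected and self-loops preserve undirectedness. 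A one-step domination argument—conditioning on the natural filtration and summing~$\Pr[\tau \ge t]$—then yields~$\Exp[N] \le \rho_\delta \cdot \Exp[\tau] \le \rho_\delta\, n^5 = \bigO(n^5/\delta)$. Consequently~$\Pr[\text{extinction}] \le \Pr[N \ge 1] \le \Exp[N] = \bigO(n^5/\delta)$, which tends to~$0$ as~$\delta \to \infty$. Since the process reaches a homogeneous state almost surely, fixation and extinction are complementary, so~$\fp(G^{\BiasedNodes}, \delta, \Configuration) \ge 1 - \bigO(n^5/\delta)$, and letting~$\delta \to \infty$ gives~$\fp^{\infty}(G^{\BiasedNodes}, \Configuration) = 1$.

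The main obstacle is controlling reversions over the \emph{entire}, a priori unbounded, trajectory: the observation that each individual reversion is unlikely ($\bigO(1/\delta)$) is by itself insufficient, because the process may run for many steps before absorbing. The decisive point is that the expected absorption time is bounded by~$n^5$ \emph{uniformly in}~$\delta \ge 1$, so the expected number of reversions factors as a vanishing per-step term times a~$\delta$-independent horizon and therefore vanishes. A secondary technical point to handle carefully is the domination~$\Exp[N] \le \rho_\delta \Exp[\tau]$, together with the logical link that extinction must pass through a reversion of the persistent witness~$u^\ast$—both of which hinge on the self-loop keeping a biased $A$-node in~$\RandomConfiguration_t$ as long as it is not explicitly selected and flipped.
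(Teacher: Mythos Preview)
Your argument is correct and takes a genuinely different route from the paper. Both proofs begin the same way, isolating the key \emph{reversion} event (a biased $A$-node flipping to $B$) and bounding its per-step probability by $\bigO(1/\delta)$ via the self-loop. Where they diverge is in how they control the total number of opportunities for reversion. The paper runs a direct geometric race: from any non-empty configuration there is a fixed positive probability (independent of~$\delta$) of reaching fixation within $n$ steps, so fixation wins the race against the vanishing reversion probability. You instead import the uniform absorption-time bound $\Exp[\tau]\le n^5$ from \cref{thm:approx_fix_prob} and combine it with the per-step reversion bound via Markov's inequality to get $\Pr[\text{extinction}]\le\Exp[N]\le\rho_\delta\cdot\Exp[\tau]=\bigO(n^5/\delta)$. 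Your approach is cleaner to write and yields an explicit convergence rate, but it is not self-contained: its correctness hinges entirely on the absorption-time bound being \emph{uniform in~$\delta$}, a point you rightly flag as decisive. The paper's race argument avoids that dependency at the cost of a slightly more ad~hoc ``fixation in $n$ steps'' estimate. One minor remark: you invoke \cref{thm:approx_fix_prob} for an arbitrary starting configuration~$\Configuration$, whereas $\ft(G^S,\delta)$ is defined via uniform random initialization; this is harmless since the potential-function proof of that lemma gives the same bound from any initial state, but it is worth stating explicitly.
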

\begin{proof}
Consider any node~$u = \Configuration\cap\BiasedNodes$.
For any two configurations~$\Configuration_1$ and~$\Configuration_2$ with~$u \in \Configuration_1$ and~$u \not \in \Configuration_2$, we have
\[
\Pr[\RandomConfiguration_{t+1}=\Configuration_2| \RandomConfiguration_t=\Configuration_1] \leq \frac{1}{n} \frac{1}{\delta}\to 0 \quad \text{as}\quad \delta\to\infty
\]

On the other hand, with probability at least~$(\sfrac1n)^n > 0$, the process reaches fixation within~$n$ steps when starting from any non-empty configuration.
Indeed, while $\emptyset \subset\Configuration\subset V$, with probability at least~$\sfrac1n$, a~$B$-node~$u$ with an~$A$-neighbor is chosen for replacement, and such~$A$-neighbor propagates its trait to~$u$ with probability that approaches~$1$ as~$\delta \to \infty$. 
Thus, the probability that we reach fixation before we reach a configuration~$\Configuration_2$ with~$\Configuration_2 \cap \BiasedNodes = \emptyset$ approaches~$1$ as~$\delta\to\infty$.
\end{proof}

Our second observation relies on~\cref{lem:strong_bias_fixation_condition} to argue that, for an undirected, regular graph~$G$ with self-loops, when the budget~$k$ is sufficiently large,
the optimal choice for the biased set~$\BiasedNodes$ forms a vertex cover of~$G$.

\begin{restatable}{lemma}{lemvertexcover}\label{lem:vertex_cover}
Let~$G = (V, E)$ be an undirected, $d$-regular graph with self-loops, and~$\BiasedNodes \subseteq V$ a biased set. Then~$\fp^{\infty}(G^{\BiasedNodes}) \geq \frac{{|\BiasedNodes|}/{n}+d}{1+d}$
iff $\BiasedNodes$ is a vertex cover of~$G$.
\end{restatable}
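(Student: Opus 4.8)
The plan is to reduce the statement, via \cref{lem:strong_bias_fixation_condition}, to a claim about how likely a \emph{single non-biased} node is to fixate. Since a biased starting node fixates with probability $1$, I can split the average and write $\fp^{\infty}(G^{S}) = \frac{1}{n}\bigl(|S| + \sum_{u \notin S} \fp^{\infty}(G^{S},\{u\})\bigr)$. The target inequality $\fp^{\infty}(G^{S}) \geq \frac{n+|S|}{2n}$ is then equivalent to $\sum_{u \notin S}\fp^{\infty}(G^{S},\{u\}) \geq \frac{n-|S|}{2}$, i.e.\ to the assertion that the non-biased nodes fixate with average probability at least $\tfrac12$. Hence it suffices to tie this average to the vertex-cover property of $S$.

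For the ``if'' direction, suppose $S$ is a vertex cover, so that $V \setminus S$ is an independent set and every non-biased node $u$ has all of its non-self neighbors in $S$. Starting from $\{u\}$, the trait $A$ cannot reach a second non-biased node without first turning some biased neighbor into an $A$-node; thus the first configuration-changing event is either (i)~$u$ reverting to the resident trait (extinction), or (ii)~a biased neighbor being selected and copying $A$, which by \cref{lem:strong_bias_fixation_condition} forces fixation. In the strong-bias limit a selected biased neighbor adopts $A$ with probability tending to $1$, whereas $u$ reverts only with its neutral probability; comparing the two rates gives $\fp^{\infty}(G^{S},\{u\}) \geq \tfrac12$ for every non-biased $u$, and summing over the $n-|S|$ of them yields the inequality.

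For the ``only if'' direction I would argue the contrapositive: if $S$ is not a vertex cover, there is an edge $\{u,w\}$ with $u,w \notin S$, and I must show the averaged non-biased fixation probability falls strictly below $\tfrac12$. Intuitively the uncovered edge is a ``leak'': starting from $u$, the trait can be copied onto the non-biased neighbor $w$ and subsequently be lost without ever reaching a biased node, breaking the clean $\tfrac12$ balance available in the independent-set case. I would formalize this through a monotone coupling comparing the process on $G$ with the process on a graph in which the offending edge is redirected to a biased node, and deduce that such rerouting can only increase the fixation probability.

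The delicate part --- and the step I expect to be the main obstacle --- is precisely this converse. The per-node fixation probabilities are not independent: they are coupled through the shared global dynamics, so I cannot simply compare $u$ node-by-node and must instead control the \emph{average} through a single coupling, or a potential/martingale argument on the whole configuration. Establishing a \emph{strict} decrease, and in particular ruling out that surplus fixation probability at the remaining non-biased nodes compensates the loss incurred along the uncovered edge, is where the real work lies.
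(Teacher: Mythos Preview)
Your reduction via \cref{lem:strong_bias_fixation_condition} and your treatment of the ``if'' direction coincide with the paper's argument. The divergence is entirely in the converse, where you anticipate needing a global coupling or potential argument to rule out ``surplus'' fixation probability at those non-biased nodes whose non-self neighbors all lie in $S$.

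The paper sidesteps this difficulty by observing that the regularity hypothesis makes your rate comparison \emph{exact} rather than an inequality. For a non-biased node $u$ whose $d-1$ non-self neighbors are all biased, the paper computes the per-step probability that $u$ spreads $A$ to some neighbor as $p_1=\frac{d-1}{n}\cdot\frac{1}{d}$ and the per-step probability that $u$ is overwritten by a resident neighbor as $p_2=\frac{1}{n}\cdot\frac{d-1}{d}$; since $p_1=p_2$, one gets $\fp^{\infty}(G^S,\{u\})=\tfrac12$ on the nose, not merely $\ge\tfrac12$. With equality in hand there is no surplus to compensate anything, and the converse collapses to a purely local claim: for each $u$ in the set $Y$ of non-biased nodes possessing at least one non-biased neighbor $v$, show $\fp^{\infty}(G^S,\{u\})<\tfrac12$. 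The paper handles this in one line by noting that one of the ``spread'' outcomes from $\{u\}$ reaches the configuration $\{u,v\}$, and from there $u$ and $v$ can be successively overwritten by resident neighbors with probability at least $(\tfrac{1}{nd})^2$, so $\fp^{\infty}(G^S,\{u,v\})<1$, which is enough to drag $\fp^{\infty}(G^S,\{u\})$ strictly below $\tfrac12$.

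So the idea you are missing is simply that regularity forces the baseline single-node fixation probability to be \emph{exactly} $\tfrac12$; once you have equality, the converse is node-by-node and your proposed coupling with a rerouted graph is unnecessary. Your instinct that a mere inequality $\ge\tfrac12$ would leave a compensation loophole is sound --- the paper's proof closes that loophole not by a global argument but by sharpening the inequality to an equality via the explicit rate computation.
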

    
\begin{proof}
Since~$G$ is connected and has self-loops, we have~$d \geq 2$. 
Due to the uniform initial placement of trait $A$, the probability that it lands on a biased node is~${|\BiasedNodes|}/{n}$. 
Let~$Y$ be the set of nodes in~$V \setminus S$ that have at least one neighbor \emph{not} in~$S$. By~\cref{lem:strong_bias_fixation_condition} we have
\[
\fp^{\infty}(G^{\BiasedNodes}) = \frac{|\BiasedNodes|}{n} + \frac{n-|\BiasedNodes|-|Y|}{n}p + \frac{1}{n} \sum_{u\in Y}\fp^{\infty}(G^{\BiasedNodes}, \{u\}),
\]
where~$p$ is the probability that a node~$u \in V \setminus \BiasedNodes$ with initial trait~$A$ and whose neighbors are all in~$\BiasedNodes$ propagates~$A$ to any of those neighbors before replacing its own trait. 
Let~$p_1$ and~$p_2$ be the probabilities that~$u$ propagates to, and gets replaced by, any of its~$d-1$ neighbors, respectively, and we have $p = \frac{p_1}{p_1+p_2}$.
Moreover,~$p_1=\frac{d-1}{n}1$ (once a neighbor $v$ dies, $u$ propagates its trait to $v$ with probability $1$), and $p_2~=~\frac{1}{n}\frac{d-1}{d}$, leading to $p=\frac{d}{d+1}$.
If~$\BiasedNodes$ is a vertex cover of~$G$, then~$Y = \emptyset$, hence~$\fp^{\infty}(G^{\BiasedNodes}) = \frac{{|\BiasedNodes|}/{n}+d}{1+d}$. 
On the other hand, if~$\BiasedNodes$ is not a vertex cover of~$G$, then~$|Y| \geq 1$. 
Consider a node~$u\in Y$ and let~$v \in \InNeighbors(u)\setminus \BiasedNodes$. 
Observe that~$\fp^{\infty}(G^{\BiasedNodes}, \{u,v\}) < 1$, since with probability at least~$(\frac{1}{n}\frac{1}{d})^2$ the traits on~$u$ and~$v$ get successively replaced by trait $B$. It follows that~$\fp^{\infty}(G^{\BiasedNodes}) < \frac{{|\BiasedNodes|}/{n}+d}{1+d}$, as desired.
\end{proof}
    
Since vertex cover is $\NP$-hard on regular graphs~\cite{Feige2003}, \cref{lem:vertex_cover} implies $\NP$-hardness for maximizing $\fp^{\infty}(G^{\BiasedNodes})$:~given a budget~$k$, $G$ has a vertex cover of size~$k$ iff~$\max_{\BiasedNodes\subseteq V, |\BiasedNodes| = k}\fp^{\infty}(G^{\BiasedNodes}) = \frac{\frac{|\BiasedNodes|}{n}+d}{1+d}$. Moreover, the continuity of~$\fp(G^{\BiasedNodes}, \delta)$ as a function of~$\delta$ implies hardness for finite~$\delta$ too. We thus arrive at the following theorem.

\begin{restatable}{theorem}{thmnphard}\label{thm:np_hard}
The problem of maximizing~$\fp(G^{\BiasedNodes}, \delta)$ and~$\fp^{\infty}(G^{\BiasedNodes})$ in the positional Voter model is $\NP$-hard.
\end{restatable}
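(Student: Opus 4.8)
The plan is to reduce from the \emph{minimum vertex cover} problem on regular graphs, which is $\NP$-hard~\cite{Feige2003}, and to let \cref{lem:vertex_cover} carry the combinatorial weight. I would first settle the strong-bias objective $\fp^{\infty}(G^{\BiasedNodes})$, where the correspondence is exact, and then transfer the hardness to finite $\FitAdv$ through a continuity argument in the bias parameter.

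For the strong-bias case, given a vertex-cover instance consisting of a $(d-1)$-regular graph $H$ on $n$ nodes and a budget $k$, I would build the positional-Voter instance by taking $G$ to be $H$ with a self-loop added at every node (so $G$ is undirected, $d$-regular, and has self-loops), keeping the same budget $k$ and working in the limit $\FitAdv\to\infty$; this transformation is clearly polynomial. By \cref{lem:vertex_cover}, a size-$k$ set $\BiasedNodes$ attains $\fp^{\infty}(G^{\BiasedNodes})=\frac{n+k}{2n}$ exactly when it is a vertex cover of $H$, and every other size-$k$ set attains a strictly smaller value. Since adding a node to a vertex cover keeps it a vertex cover, $H$ has a vertex cover of size at most $k$ iff it has one of size exactly $k$ (for $k\le n$), and hence iff $\max_{|\BiasedNodes|=k}\fp^{\infty}(G^{\BiasedNodes})=\frac{n+k}{2n}$. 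Thus any algorithm computing the optimal value of $\fp^{\infty}$ (or an optimal biased set) decides vertex cover, establishing $\NP$-hardness of the strong-bias problem.

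To obtain hardness for finite $\FitAdv$, I would reuse the same graph $G$ and exploit that $\fp(G^{\BiasedNodes},\FitAdv)$ is continuous in $\FitAdv$ and converges to $\fp^{\infty}(G^{\BiasedNodes})$ as $\FitAdv\to\infty$. Because there are only finitely many size-$k$ candidate sets and their limiting values exhibit a strict gap — vertex covers attain $\frac{n+k}{2n}$ while all non-covers fall strictly below — there is a threshold $\FitAdv^{*}$ beyond which the ranking of the sets, and in particular the identity of the maximizer, agrees with that at $\FitAdv=\infty$. Outputting the instance with $\FitAdv=\FitAdv^{*}$ then makes an optimal size-$k$ biased set a vertex cover of $H$ iff $H$ admits one, so the same reduction decides vertex cover under finite bias.

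The main obstacle is turning this last step into an honest polynomial-time reduction: the abstract continuity argument only guarantees \emph{some} threshold $\FitAdv^{*}$, whereas the reduction must emit a value of $\FitAdv$ with polynomially many bits that is computable in polynomial time. I would therefore need an \emph{effective} version of the convergence — a quantitative lower bound on the strong-bias gap for non-covers, together with an $\O(1/\FitAdv)$-type bound on $|\fp(G^{\BiasedNodes},\FitAdv)-\fp^{\infty}(G^{\BiasedNodes})|$ — so that a value $\FitAdv^{*}=\mathrm{poly}(n)$ provably preserves the gap. Establishing such explicit estimates (for instance by controlling, as in \cref{lem:strong_bias_fixation_condition}, the $\O(1/\FitAdv)$ probability of ever leaving a configuration that meets the biased set) is the delicate part; the combinatorial equivalence itself is immediate from \cref{lem:vertex_cover}.
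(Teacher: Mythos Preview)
Your proposal is correct and follows essentially the same route as the paper: reduce from vertex cover on regular graphs, add self-loops, invoke \cref{lem:vertex_cover} for the strong-bias case, and then appeal to continuity of $\fp(G^{\BiasedNodes},\FitAdv)$ in $\FitAdv$ to carry hardness over to finite bias. The paper's own argument for finite $\FitAdv$ is in fact terser than yours---it simply asserts that continuity ``implies hardness for finite $\FitAdv$ too'' without making the threshold $\FitAdv^{*}$ effective---so the obstacle you flag (producing a polynomially-bounded $\FitAdv^{*}$) is a genuine subtlety that the paper leaves implicit rather than resolves.
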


\section{Monotonicity and Submodularity}

We now turn our attention to the monotonicity and (conditional) submodularity properties of the fixation probability. Our first lemma formally establishes the intuition that, as we increase the set of biased nodes~$\BiasedNodes$ or the invasion bias~$\delta$, the chances that the novel trait~$A$ fixates do not worsen.

\begin{restatable}{lemma}{lemmonotonicity}\label{lem:monotonicity}
For a graph~$G$, biased sets~$\BiasedNodes_1, \BiasedNodes_2\subseteq V$ with~$\BiasedNodes_1 \subseteq \BiasedNodes_2$ and biases~$\delta_1, \delta_2 \geq 0$ with~$\delta_1 \leq \delta_2$, we have~$\fp(G^{\BiasedNodes_1}(\delta_1)) \leq \fp(G^{\BiasedNodes_2}(\delta_2))$.
\end{restatable}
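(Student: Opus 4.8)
The plan is to prove the inequality via a \emph{monotone coupling} of the two positional Voter processes: one run with parameters $(\BiasedNodes_1,\delta_1)$ and one with $(\BiasedNodes_2,\delta_2)$, started from a common initial node. Since $\fp(G^{\BiasedNodes},\delta)$ is the uniform average over singleton starts, it suffices to show $\fp(G^{\BiasedNodes_1},\delta_1,\{u\}) \le \fp(G^{\BiasedNodes_2},\delta_2,\{u\})$ for each $u\in V$ and then average. Throughout, both processes will share the same uniformly chosen death node at every step, so the only randomness left to couple is whether the dying node adopts trait~$A$. Note the argument needs no undirectedness: it applies to any strongly connected weighted digraph, matching the generality of the statement.

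First I would isolate the one-step adoption probability. If node~$u$ is selected to die in configuration~$\Configuration$, it adopts~$A$ exactly when the sampled in-neighbor carries~$A$; writing $a=\sum_{v\in\InNeighbors(u)\cap\Configuration}\Weight(v,u)$ and $b=\sum_{v\in\InNeighbors(u)\setminus\Configuration}\Weight(v,u)$ for the $A$- and $B$-weights, \cref{eq:new_bias} gives the adoption probability
\[
q_{\Configuration}^{\BiasedNodes,\delta}(u)=
\begin{cases}
\dfrac{(1+\delta)a}{(1+\delta)a+b}, & u\in\BiasedNodes,\\[4pt]
\dfrac{a}{a+b}, & u\notin\BiasedNodes.
\end{cases}
\]
I then verify that $q$ is monotone non-decreasing simultaneously in all three arguments: (i)~in the configuration, since enlarging~$\Configuration$ only moves in-neighbor weight from~$b$ into~$a$; (ii)~in the biased set, since turning~$u$ biased replaces the unbiased ratio by the biased one with factor $1+\delta\ge 1$; and (iii)~in the bias, since $\delta\mapsto\frac{(1+\delta)a}{(1+\delta)a+b}$ is increasing. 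Chaining these under the hypotheses $\Configuration_1\subseteq\Configuration_2$, $\BiasedNodes_1\subseteq\BiasedNodes_2$, $\delta_1\le\delta_2$ yields the key comparison $q_{\Configuration_1}^{\BiasedNodes_1,\delta_1}(u)\le q_{\Configuration_2}^{\BiasedNodes_2,\delta_2}(u)$.

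With this comparison in hand, I would build the coupling: at each step draw a common death node~$u$ together with a single threshold $U\sim\mathrm{Unif}[0,1]$, and declare that~$u$ adopts~$A$ in process~$i$ iff $U\le q_{\Configuration_i}^{\BiasedNodes_i,\delta_i}(u)$. Marginally each process then follows its own dynamics, and I would show by induction that the containment $\RandomConfiguration_t^{(1)}\subseteq\RandomConfiguration_t^{(2)}$ is preserved: only~$u$ can change, and because its two adoption thresholds are ordered, $u\in\RandomConfiguration_{t+1}^{(1)}$ forces $u\in\RandomConfiguration_{t+1}^{(2)}$, while all other nodes are untouched. Since $\RandomConfiguration_t^{(1)}=V$ implies $\RandomConfiguration_t^{(2)}=V$, fixation of the first process entails fixation of the second, and the per-node inequality follows by taking probabilities under the coupling.

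I expect the main obstacle to be the interplay of the combined monotonicity of~$q$ with the fact that the Voter process is \emph{non-progressive}: configurations may shrink, so there is no simple ``$A$-sets only grow'' domination to lean on. The argument nevertheless goes through precisely because the single shared threshold~$U$ and the simultaneous monotonicity guarantee that at the unique updated node the ordering is respected in both directions (adopting~$A$ in process~$1$ forces adopting~$A$ in process~$2$, equivalently reverting to~$B$ in process~$2$ forces reverting to~$B$ in process~$1$), so domination is maintained even through shrinking steps. The remaining work is the routine verification of (i)--(iii), which amounts to elementary monotonicity of the fractions above.
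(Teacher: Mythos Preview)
Your proposal is correct and follows essentially the same approach as the paper: a monotone coupling that shares the death node across the two processes and uses the ordering of the one-step adoption probabilities to maintain $\RandomConfiguration_t^{(1)}\subseteq\RandomConfiguration_t^{(2)}$ by induction. Your version is somewhat more explicit---you write the adoption probability $q$ in closed form, verify its monotonicity in configuration, biased set, and bias separately, and realize the coupling via a shared uniform threshold---whereas the paper compresses all of this into the single observation $p_1\le p_2$, but the underlying idea is identical.
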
 

\begin{proof}
Consider the two respective Voter processes $\MC_1 = \RandomConfiguration_0^1, \RandomConfiguration_1^1, \dots$ and $\MC_2 = \RandomConfiguration_0^2, \RandomConfiguration_1^2, \dots$. We establish a coupling between~$\MC_1$ and~$\MC_2$ that guarantees~$\RandomConfiguration_t^1 \subseteq \RandomConfiguration_t^2$, for all~$t$, which proves the lemma.

Indeed, assume that the two processes are in configurations~$\RandomConfiguration_t^1 = \Configuration^1$ and $\RandomConfiguration_t^2 = \Configuration_2$, with $\Configuration_1 \subseteq \Configuration_2$. We choose the same node~$u$ to be replaced in the two processes, uniformly at random. Observe that, since~$\BiasedNodes_1 \subseteq \BiasedNodes_2$ and~$\delta_1 \leq \delta_2$, the probability~$p_2$ that~$v$ is replaced by an~$A$-neighbor in~$\MC_2$ is at least as large as the corresponding probability~$p_1$ in~$\MC_1$. Thus, with probability~$p_2 - p_1 \geq 0$, we replace~$u$ in~$\MC_2$ with one of its~$A$-neighbors, leading to a configuration $\Configuration_2' \supseteq \Configuration_2 \supseteq \Configuration_1$. With probability~$p_1$, we replace~$u$ in both~$\MC_1$ and~$\MC_2$ with one of its~$A$-neighbors, leading to configurations~$\Configuration'_1$ and~$\Configuration'_2$ with~$\Configuration'_2\supseteq \Configuration'_1$. Lastly, with probability~$1-p_1$, we replace~$u$ in both~$\MC_1$ and~$\MC_2$ with one of its~$B$-neighbors, again leading to configurations~$\Configuration'_1$ and~$\Configuration'_2$ with~$\Configuration'_2 \supseteq \Configuration'_1$. The desired result follows.
\end{proof}

\Paragraph{Submodularity.}
A real-valued set function~$f$ is called submodular if for any two sets~$S_1, S_2$, we have
\[
f(\BiasedNodes_1) + f(\BiasedNodes_2) \geq f(\BiasedNodes_1\cup \BiasedNodes_2) + f(\BiasedNodes_1\cap \BiasedNodes_2)\ .
\numberthis\label{eq:submodularity_condition}
\]
Submodularity captures a property of diminishing returns, whereby the contribution of an element~$u\in \BiasedNodes_1$ to the value of the function~$f(\BiasedNodes_1)$ decreases as~$\BiasedNodes_1$ increases. The maximization of monotone submodular functions is known to be efficiently approximable, even though it might be intractable to achieve the maximum value~\cite{Nemhauser1978}. In light of our hardness result (\cref{thm:np_hard}) and the monotonicity result (\cref{lem:monotonicity}), it is natural to hope for the submodularity of the fixation probability as a means to at least approximate our maximization problem efficiently. Unfortunately, as the next theorem shows, neither~$\fp(G^{\BiasedNodes}, \delta)$ nor~$\fp^{\infty}(G^{\BiasedNodes})$ exhibit submodularity.

\begin{restatable}{theorem}{thmnonsubmodularity}\label{thm:non_submodularity}
The following assertions hold:
\begin{compactenum}
\item $\fp(G^{\BiasedNodes},\delta)$ is not submodular, and this holds also on graphs with self loops.
\item $\fp^{\infty}(G^{\BiasedNodes})$ is not submodular in general.
\end{compactenum}
\end{restatable}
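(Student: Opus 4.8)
The plan is to refute submodularity by exhibiting explicit counterexamples, one per assertion. By the marginal-return characterization, a set function fails \cref{eq:submodularity_condition} precisely when there exist nested sets $A \subseteq B$ and a node $u \notin B$ whose marginal contribution \emph{increases}, i.e. $f(B\cup\{u\}) - f(B) > f(A\cup\{u\}) - f(A)$. The simplest such witness takes $A = \emptyset$ and $B = \{v\}$, which reduces the task to finding a graph and two nodes $u,v$ that \emph{synergize}: $\fp(G^{\{u,v\}}, \delta) - \fp(G^{\{v\}}, \delta) > \fp(G^{\{u\}}, \delta) - \fp(G^{\emptyset}, \delta)$. Intuitively, two mutually reinforcing biased nodes form a robust ``core'' for the novel trait that neither sustains alone, yielding increasing rather than diminishing returns.

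For assertion~1 I would fix a small graph $G$ equipped with self-loops and a finite bias $\delta$, and compute the four required fixation probabilities exactly. Since $\fp$ is the absorption probability of the Voter Markov chain, I would set up the linear system over configurations $\Configuration \subseteq V$, using the transition probabilities of \cref{eq:birth_prob} with the positional influence strength of \cref{eq:new_bias}. Exploiting the symmetries of the chosen graph collapses the $2^n$ states into a handful of equivalence classes, so the system is solvable in closed form as a rational function of $\delta$. Evaluating $\fp(G^{\emptyset},\delta)$, $\fp(G^{\{u\}},\delta)$, $\fp(G^{\{v\}},\delta)$, $\fp(G^{\{u,v\}},\delta)$ and checking the synergy inequality at a concrete $\delta$ (hence on an interval, by continuity) settles this case. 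Retaining self-loops here is the point: in contrast with the strong-bias regime of \cref{thm:strong_bias_approximation}, self-loops do \emph{not} restore submodularity at finite $\delta$.

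For assertion~2 I would work directly in the strong-bias limit on a graph \emph{without} self-loops (self-loops must be excluded, since \cref{thm:strong_bias_approximation} shows $\fp^{\infty}$ becomes submodular when they are present on regular graphs). I would take $\delta \to \infty$ in the same configuration-level linear system: in the limit a biased node selected for update adopts $A$ almost surely whenever a neighbor carries $A$, while unbiased nodes update neutrally, again reducing the chain to a small, explicitly solvable one after factoring out symmetry. I expect the main obstacle to be this limit computation. Unlike the setting of \cref{lem:strong_bias_fixation_condition}, without self-loops a biased node holding $A$ can still lose it when a neighbor is updated, so fixation is \emph{not} guaranteed; the vanishing $\O(1/\delta)$ transition probabilities must be tracked carefully to extract the correct limiting absorption probabilities. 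Once $\fp^{\infty}(G^{\emptyset})$, $\fp^{\infty}(G^{\{u\}})$, $\fp^{\infty}(G^{\{v\}})$, $\fp^{\infty}(G^{\{u,v\}})$ are in hand, verifying the synergy inequality refutes submodularity in the strong-bias regime as well.
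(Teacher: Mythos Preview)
Your proposal is correct and mirrors the paper's approach: the paper also proves the theorem by exhibiting explicit small counterexamples (a 4-cycle with and without self-loops for part~1, a 9-node wheel without self-loops for part~2), computing the relevant fixation probabilities exactly via the absorption probabilities of the underlying Markov chain, and checking that \cref{eq:submodularity_condition} is violated. The only cosmetic differences are that the paper uses a numerical solver rather than closed-form rational expressions, and for part~1 it chooses disjoint 2-element biased sets $\BiasedNodes_1,\BiasedNodes_2$ rather than your singleton synergy formulation---an equivalent instantiation of the same submodularity inequality.
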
 
\begin{proof}
Our proof is via counterexamples, shown in~\cref{fig:non_submod}. In each case we choose biased sets~$\BiasedNodes_1, \BiasedNodes_2$ that violate the submodularity condition~\cref{eq:submodularity_condition}. We derive the fixation probability exactly using a numerical solver that computes the absorption probabilities in the underlying Markov chain defined by the Voter process on each graph.

\begin{compactenum}
\item Consider the cycle graph of four nodes (\cref{fig:non_submod}, left), with~$\BiasedNodes_1 = \{0,2\}$ and~$\BiasedNodes_2=\{1,3\}$. We calculate
\begin{align*}
&\fp(G^{\BiasedNodes_1}, 0.1) = \fp(G^{\BiasedNodes_2}, 0.1) \leq 0.26194, \text{ while}\\
&\fp(G^{\BiasedNodes_1\cup \BiasedNodes_2}, 0.1) \geq 0.274  \text{ and } \fp(G^{\BiasedNodes_1\cap \BiasedNodes_2}, 0.1) \geq 0.25    
\end{align*}
which violates \cref{eq:submodularity_condition}.
Now consider the same graph with additional self-loops (\cref{fig:non_submod}, middle). We calculate
\begin{align*}
&\fp(G^{\BiasedNodes_1}, 0.1) = \fp(G^{\BiasedNodes_2}, 0.1) \leq 0.2702, \text{ while}\\
&\fp(G^{\BiasedNodes_1\cup \BiasedNodes_2}, 0.1) \geq 0.2909 \text{ and } \fp(G^{\BiasedNodes_1\cap \BiasedNodes_2}, 0.1) \geq 0.25
\end{align*}
which also violates \cref{eq:submodularity_condition}.

\item Consider the wheel graph of 9 nodes (\cref{fig:non_submod}, right), with~$\BiasedNodes_1 = \{1\}$ and~$\BiasedNodes_2 = \{5\}$. We calculate
\begin{align*}
&\fp^{\infty}(G^{\BiasedNodes_1}) = \fp^{\infty}(G^{\BiasedNodes_2}) \leq 0.19, \text{ while}\\
&\fp^{\infty}(G^{\BiasedNodes_1\cup \BiasedNodes_2}) \geq 0.27 \text{ and } \fp^{\infty}(G^{\BiasedNodes_1\cap \BiasedNodes_2}) \geq 0.111
\end{align*}
which violates \cref{eq:submodularity_condition}.
\end{compactenum}
\end{proof}

\begin{figure}[H] 
\centering
\includegraphics[width=0.4\textwidth]{\mypath 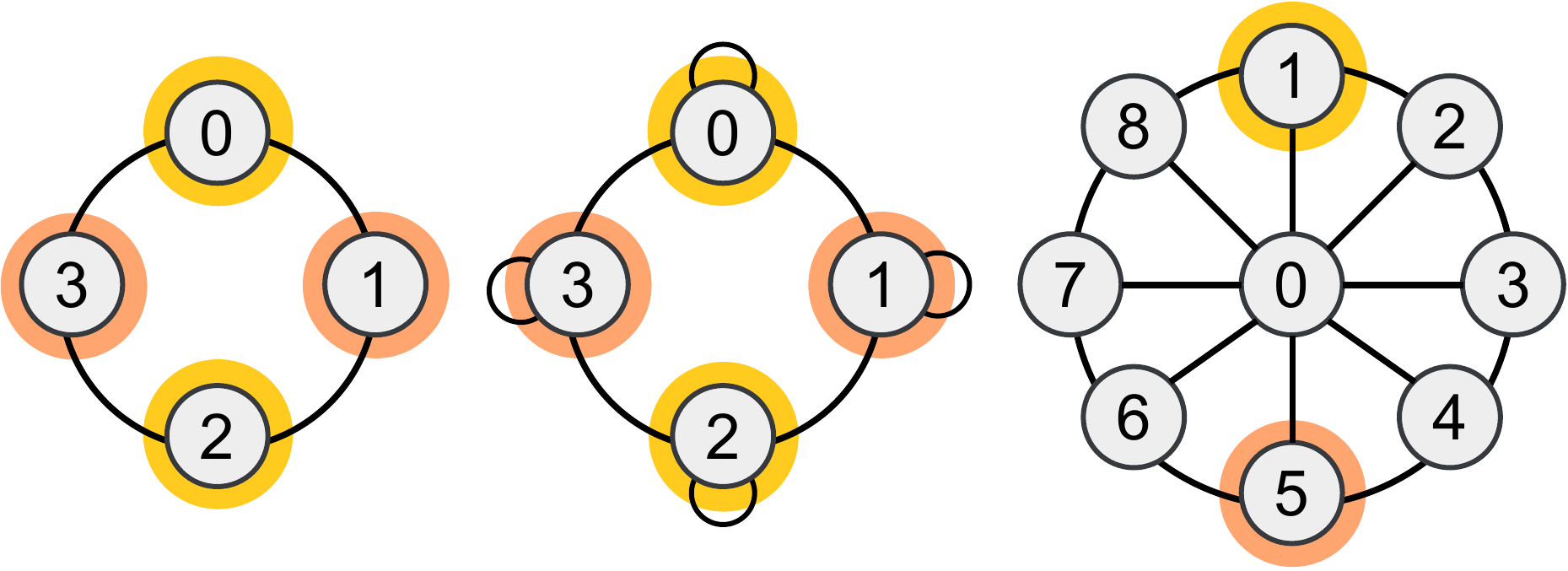}
\caption{Counterexamples to submodularity.}\label{fig:non_submod}
\end{figure}

Interestingly, \cref{thm:non_submodularity} distinguishes the positional Voter and Moran models, as for the latter, $\fp^{\infty}(G^{\BiasedNodes})$ is submodular on arbitrary graphs~\cite{Brendborg2022}. Also, note the asymmetry between $\fp(G^{\BiasedNodes}, \delta)$ and $\fp^{\infty}(G^{\BiasedNodes})$ in \cref{thm:non_submodularity} with regards to graphs having self loops. Indeed, as we show next, $\fp^{\infty}(G^{\BiasedNodes})$ becomes submodular on graphs with self loops (unlike $\fp(G^{\BiasedNodes}, \delta)$).

\begin{restatable}{lemma}{lemsubmodularitystrong}\label{lem:submodularity_strong}
For any undirected graph $G$ with self-loops,
$\fp^{\infty}(G^{S})$ is submodular.
\end{restatable}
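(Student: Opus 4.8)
The plan is to represent $\fp^{\infty}(G^{S})$ as a nonnegative combination of \emph{coverage} functions, each of which is submodular, and then invoke closure of submodularity under nonnegative linear combinations. The whole argument is built on one pivotal structural fact, which is where the self-loops and the strong bias enter.

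First I would establish that structural fact, via a coupling. On a graph with self-loops, I claim that in the limit $\delta\to\infty$ the positional process is indistinguishable from the \emph{neutral} Voter model up until the first time a node of $\BiasedNodes$ carries trait~$A$, and that from that moment fixation is certain. The first half holds because the invasion bias is inert as long as every $A$-node lies in $V\setminus\BiasedNodes$: the boosted influence $1+\delta$ can only be exerted through a biased node carrying $A$, so while no such node exists every weight in \cref{eq:birth_prob} collapses to its neutral value. The second half is precisely \cref{lem:strong_bias_fixation_condition}. Concretely, I would run the positional and neutral processes on shared randomness up to the stopping time $\sigma$ defined as the first time some node of $\BiasedNodes$ is infected: the two trajectories coincide on $\{t<\sigma\}$, while on $\{\sigma<\infty\}$ the positional process fixates almost surely by \cref{lem:strong_bias_fixation_condition}. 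Conversely, if $\sigma=\infty$ then the coupled neutral trajectory never infects $\BiasedNodes$; since reaching the all-$A$ state would infect $\BiasedNodes$, this forces extinction, so the positional process does not fixate either.

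Second, let $\mathcal I_u\subseteq V$ be the random, \emph{$\BiasedNodes$-independent} set of nodes that ever carry $A$ in the neutral Voter model seeded at the single node $u$. The structural fact yields the clean identity
\[
\fp^{\infty}(G^{\BiasedNodes},\{u\}) \;=\; \Pr[\mathcal I_u\cap \BiasedNodes\neq\emptyset],
\]
because fixation occurs iff $\sigma<\infty$ iff some biased node is infected in the coupled neutral run. Since $u\in\mathcal I_u$ almost surely, this correctly returns $1$ when $u\in\BiasedNodes$, matching \cref{lem:strong_bias_fixation_condition}. Submodularity now follows from the identity: for each \emph{fixed} realization $\mathcal I_u=I$, the indicator $S\mapsto\mathbb 1[I\cap S\neq\emptyset]$ is a coverage function, and its marginal gain from adding $x$ to $S$ equals $\mathbb 1[x\in I]\cdot\mathbb 1[S\cap I=\emptyset]$, which is nonincreasing in $S$; hence it is submodular, and taking expectation over $\mathcal I_u$ preserves submodularity. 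Thus each $S\mapsto\fp^{\infty}(G^{\BiasedNodes},\{u\})$ is submodular, and so is the average $\fp^{\infty}(G^{\BiasedNodes})=\frac1n\sum_{u\in V}\fp^{\infty}(G^{\BiasedNodes},\{u\})$, proving the lemma.

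The main obstacle is the first step: justifying the reduction to the neutral process in the limit $\delta\to\infty$, i.e.\ that $\fp^{\infty}(G^{\BiasedNodes},\{u\})=\lim_{\delta\to\infty}\fp(G^{\BiasedNodes},\delta,\{u\})$ really equals $\Pr[\sigma<\infty]$ and not something smaller, which amounts to interchanging the limit with the infinite-horizon absorption events. I would handle this by bounding, at finite $\delta$, two error terms and letting $\delta\to\infty$: the probability that the pre-$\sigma$ dynamics deviate from neutral is identically $0$ since the bias is inert there, and the probability that a biased $A$-node is ever ``undone'' before fixation is $O(1/\delta)$ by the estimate used in \cref{lem:strong_bias_fixation_condition}. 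Once this interchange is secured, the coverage-submodularity and the averaging are entirely standard.
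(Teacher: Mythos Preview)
Your coverage idea is the right intuition, but the key identity
\[
\fp^{\infty}(G^{S},\{u\})=\Pr[\mathcal I_u\cap S\neq\emptyset]
\]
is false, because you have misread when the bias acts. By \cref{eq:new_bias}, $f_{\Configuration}^{S}(v\mid u)=1+\delta$ whenever the \emph{dying} node $u$ is in $S$ and its neighbor $v$ carries $A$; it does \emph{not} require any node of $S$ to currently carry $A$. Hence the positional process already deviates from neutral the very first time a biased node with an $A$-neighbor is selected for death, which can (and typically does) happen strictly before any biased node is infected in the neutral run. A two-node check makes the discrepancy concrete: on $K_2$ with self-loops, $S=\{2\}$, seed $u=1$, one computes $\fp^{\infty}(G^{S},\{1\})=\tfrac23$, whereas $\Pr[\mathcal I_1\cap S\neq\emptyset]=\Pr[\text{node }2\text{ is ever infected in the neutral process}]=\tfrac12$.

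The fix is to change the random set. Let $\mathcal J_u$ be the set of nodes that are ever \emph{chosen for death while having an $A$-neighbor} in the neutral process seeded at $u$ (with the convention $u\in\mathcal J_u$). Then the positional and neutral processes genuinely coincide up to the first time some $v\in S$ enters $\mathcal J_u$, and at that step the biased node adopts $A$ almost surely in the limit, so \cref{lem:strong_bias_fixation_condition} gives
\[
\fp^{\infty}(G^{S},\{u\})=\Pr[\mathcal J_u\cap S\neq\emptyset],
\]
and your coverage/averaging argument goes through unchanged. This is exactly the content of the paper's proof, which works trajectory-by-trajectory: its ``minimal and fixating'' prefixes stop at the first death of a biased node with an $A$-neighbor, and their probabilities are computed under neutral transition rates. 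So once corrected, your approach and the paper's coincide; the paper's trajectory bookkeeping is what makes the coupling-to-neutral step precise.
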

\begin{proof}
\emph{A finite trajectory $\Trajectory$} is either a node $\Trajectory=u_0$ or a sequence
$\Trajectory=u_0, (u_1, v_1), (u_2, v_2),\dots, (u_m, v_m), u_{m+1}$,
representing that
\begin{compactenum}
\item trait $A$ starts from $u_0$,
\item for each time $t\in \{1,\dots, m\}$, node $u_t$ adopts the trait of its neighbor $v_t$, and
\item $u_{m+1}$ is chosen for death at time $m+1$.
\end{compactenum}
A prefix $\Trajectory'$ of $\Trajectory$ is $\Trajectory'=u_0$, if $\Trajectory=u_0$,
or $\Trajectory'=u_0, (u_1, v_1), \dots, (u_{t-1}, v_{t-1}), u_t$, for some $t\in[m+1]$.
Thus $\Trajectory'$ follows $\Trajectory$ until the death-step of the $(t+1)$-th event.
We say that $\Trajectory$ is \emph{minimal and fixating} for $\BiasedNodes$ if either $\Trajectory=u_0$ with $u_0\in \BiasedNodes$, or
\begin{compactenum}
\item $u_0\not\in \BiasedNodes$;
\item for each $t\in \{1,\dots,m\}$, we have that either $u_t\not \in \BiasedNodes$, or all neighbors of $u_t$ are $B$-nodes after executing $u_0, (u_1, v_1),\dots (u_t, v_t)$; and
\item $u_{m+1}\in \BiasedNodes$ and $u_{m+1}$ has at least one $A$-neighbor after executing $u_0, (u_1, v_1), (u_2, v_2),\dots, (u_m, v_m)$.
\end{compactenum}
In other words, the last step of~$\Trajectory$ is the first that makes a biased node adopt the novel trait~$A$. Every trajectory that leads to the fixation of~$A$ in~$G^{\BiasedNodes}$ has a \emph{minimal and fixating} prefix. By~\cref{lem:strong_bias_fixation_condition}, the opposite is also true: every minimal and fixating trajectory eventually leads to fixation. Thus, we can compute~$\fp^{\infty}(G^{\BiasedNodes})$ by summing the probabilities of occurrence of each minimal and fixating trajectory~$\Trajectory$.

Moreover, the probability of a minimal and fixating~$\Trajectory$ occurring is independent of~$\BiasedNodes$:~the steps~$u_0$ and~$u_{m+1}$ have probability of~$\sfrac1n$ (since the initial placement of~$A$ is uniform, and each node is chosen for death also uniformly), while each step~$(u_t, v_t)$, for~$t \in \{1, \dots, m\}$ has probability~$\frac{1}{n \cdot \InDegree(u_t)}$. Thus, to arrive at the submodularity of~$\fp^{\infty}(G^{\BiasedNodes})$, it suffices to argue that, for any two biased sets~$\BiasedNodes_1, \BiasedNodes_2$,
\begin{compactenum}
\item\label{item:cup} if $\Trajectory$ is minimal and fixating for $\BiasedNodes_1\cup \BiasedNodes_2$, then it is minimal and fixating for at least one of $\BiasedNodes_1$, $\BiasedNodes_2$, and
\item\label{item:cap} if $\Trajectory$ is minimal and fixating for $\BiasedNodes_1\cap \BiasedNodes_2$, then it has prefixes that are minimal and fixating for both $\BiasedNodes_1$ and $\BiasedNodes_2$.
\end{compactenum}

Indeed, for Item~1, if $\Trajectory=u_0$, then $u_0\in \BiasedNodes_1\cup \BiasedNodes_2$ thus clearly $\Trajectory$ is minimal and fixating for at least one of $\BiasedNodes_1$ and $\BiasedNodes_2$. Similarly, if $\Trajectory=u_0, (u_1, v_1), (u_2, v_2),\dots, (u_m, v_m), u_{m+1}$, then $u_{m+1}$ has an $A$-neighbor in $\BiasedNodes_1\cup \BiasedNodes_2$.
Thus $\Trajectory$ is minimal and fixating for at least one of~$\BiasedNodes_1$ and~$\BiasedNodes_2$, as further, no earlier node~$u_{t}$, for~$t \in [m]$ could have an~$A$-neighbor in~$\BiasedNodes_1 \cup \BiasedNodes_2$.

A similar analysis holds for Item~2. If~$\Trajectory = u_0$, then~$u_0 \in \BiasedNodes_1 \cap \BiasedNodes_2$, and thus~$\Trajectory$ is also minimal and fixating for both~$\BiasedNodes_1$ and~$\BiasedNodes_2$. On the other hand, if~$\Trajectory=u_0, (u_1, v_1), (u_2, v_2),\dots, (u_m, v_m), u_{m+1}$, then~$u_{m+1}$ has an~$A$-neighbor in~$\BiasedNodes_1 \cap \BiasedNodes_2$. Hence, for each set~$Y \in \{\BiasedNodes_1, \BiasedNodes_2\}$, either some earlier node~$u_{t}$ has an~$A$-neighbor in~$Y$ (and thus the prefix~$\Trajectory' = u_0, (u_1, v_1), \dots, (u_{t-1}, v_{t-1}), u_t$ is minimal and fixating for~$Y$), or~$\Trajectory$ is fixating for~$Y$ at~$u_{m+1}$. The desired result follows.
\end{proof}

The monotonicity and submodularity properties of~\cref{lem:monotonicity} and~\cref{lem:submodularity_strong} lead to the following theorem~\cite{Nemhauser1978}.

\begin{restatable}{theorem}{thmstrongbiasapproximation}\label{thm:strong_bias_approximation}
Given an undirected graph~$G$ with self loops and integer~$k$, let~$\BiasedNodes^*$ be the biased set that maximizes~$\fp^{\infty}(G^{\BiasedNodes})$, and~$\BiasedNodes'$ the biased set constructed by a greedy algorithm opting for maximal gains in each step. Then~$\fp^{\infty}(G^{\BiasedNodes'}) \geq (1-\frac{1}{e})\fp^{\infty}(G^{\BiasedNodes^*})$.
\end{restatable}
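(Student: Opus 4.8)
The plan is to recognize the statement as a direct instance of the classical greedy guarantee for maximizing a monotone, nonnegative, submodular set function subject to a cardinality constraint~\cite{Nemhauser1978}. All the structural work has already been carried out in the preceding lemmas, so what remains is to assemble the three required preconditions and to dispatch one minor normalization subtlety.

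First I would verify the preconditions for the set function~$\fp^{\infty}(G^{\BiasedNodes})$. Nonnegativity is immediate, since it is a probability and hence lies in~$[0,1]$. Monotonicity follows from~\cref{lem:monotonicity}: for~$\BiasedNodes_1 \subseteq \BiasedNodes_2$ and any fixed~$\delta$, that lemma gives~$\fp(G^{\BiasedNodes_1}, \delta) \leq \fp(G^{\BiasedNodes_2}, \delta)$, and since this inequality holds for every~$\delta$, taking~$\delta\to\infty$ preserves it and yields~$\fp^{\infty}(G^{\BiasedNodes_1}) \leq \fp^{\infty}(G^{\BiasedNodes_2})$. Submodularity is precisely~\cref{lem:submodularity_strong}, which applies because~$G$ has self-loops. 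Finally, although the optimization in~\cref{eq:maximization_strong} is stated over sets of size exactly~$k$, monotonicity makes this equivalent to the cardinality constraint~$|\BiasedNodes| \leq k$ to which the greedy theorem applies, so the optimum~$\BiasedNodes^*$ coincides in both formulations.

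The one subtlety is that the function is not normalized: with no biased nodes the bias is inconsequential, so~$\fp^{\infty}(G^{\emptyset}) = \sfrac1n > 0$, whereas the~$(1-\sfrac1e)$ bound is customarily stated for functions with~$f(\emptyset)=0$. I would handle this by passing to the shifted function~$g(\BiasedNodes) = \fp^{\infty}(G^{\BiasedNodes}) - \fp^{\infty}(G^{\emptyset})$, which is still monotone and submodular and satisfies~$g(\emptyset)=0$. Since subtracting a constant does not change which element maximizes the marginal gain at any step, the greedy run on~$\fp^{\infty}$ and on~$g$ produce the same set~$\BiasedNodes'$, and the standard guarantee gives~$g(\BiasedNodes') \geq (1-\sfrac1e)\,g(\BiasedNodes^*)$. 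Unfolding the definition of~$g$ and using~$\fp^{\infty}(G^{\emptyset}) \geq 0$ then yields
\[
\fp^{\infty}(G^{\BiasedNodes'}) \geq \left(1-\tfrac1e\right)\fp^{\infty}(G^{\BiasedNodes^*}) + \tfrac1e\,\fp^{\infty}(G^{\emptyset}) \geq \left(1-\tfrac1e\right)\fp^{\infty}(G^{\BiasedNodes^*}),
\]
which is the claimed bound.

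I do not anticipate a genuine obstacle at this stage, since the depth of the argument resides entirely in establishing submodularity~(\cref{lem:submodularity_strong}) and monotonicity~(\cref{lem:monotonicity}), both already in hand. The only points deserving a line of care are the limit interchange~$\delta\to\infty$ used to transfer monotonicity to~$\fp^{\infty}$, and the normalization shift above; neither is deep, and the theorem follows.
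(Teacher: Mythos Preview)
Your proposal is correct and follows the same route as the paper, which simply invokes~\cite{Nemhauser1978} on the strength of \cref{lem:monotonicity} and \cref{lem:submodularity_strong} without writing out any further argument. In fact you are more careful than the paper: the normalization issue $\fp^{\infty}(G^{\emptyset})=\sfrac1n>0$ and the passage to the limit $\delta\to\infty$ for monotonicity are points the paper leaves implicit, and your shift-to-$g$ treatment of the former is the standard and correct fix.
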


\section{Optimization for Weak Bias}

In this section we turn our attention to the case of weak bias. Recall that our goal in this setting is to maximize $\dfp$, i.e., the derivative of the fixation probability evaluated at~$\delta=0$.
We show that the problem can be solved efficiently on the class of graphs that have symmetric edge weights (i.e., $\Weight(u,v) = \Weight(v,u)$ for all $u,v\in V$). We arrive at our result by extending the weak-selection method that was developed recently for the basic Voter model in the context of evolutionary dynamics~\cite{Allen2020,allen2021fixation}.

Consider a symmetric graph~$G=(V,E)$ and a biased set~$\BiasedNodes$. 
Given a node~$i\in V$, we write~$\lambda_i=1$ to denote that~$i \in \BiasedNodes$, and~$\lambda_i = 0$ otherwise. 
Given two nodes $i,j\in V$, we let $p_{ij}=\frac{\Weight(i,j)}{\sum_{l\in  V}\Weight(i,l)}$ be the probability that a $1$-step random walk starting in~$i$ ends in~$j$. 
We also let $b_{ij}^{(2)} = \sum_{l\in  V}\lambda_{l}p_{il} \sum_{j\in  V}p_{l j}$ be the probability that a $2$-step random walk that starts in~$i$, goes through a biased node~$l$ and ends in~$j$. 
The following is the key result in this section.

\begin{restatable}{lemma}{lemweakselection}\label{lem:weak_selection}
Consider a symmetric graph $G = (V, E)$, and arbitrary biased set $\BiasedNodes$.
We have
$\dfp = \frac{1}{n}\sum_{i,j\in V}\pi_i b_{ij}^{(2)}\psi_{ij}$, where
$\{\pi_i| i \in V\}$ is the solution to the linear system
\begin{align*}
&\pi_i=\left(1-\sum_{j\in V}p_{ji}\right)\pi_i+\sum_{j\in V}p_{ij}\pi_j,\quad \forall i\in V\\
&\sum_{i\in V}\pi_i=1
\numberthis\label{eq:system_pi}
\end{align*}
and $\{\psi_{ij} | (i, j ) \in E\}$ is the solution to the linear system
\begin{align*}
\psi_{ij}=
\begin{cases}
{\frac{1+\sum_{l\in V}(p_{il}\psi_{lj}+p_{jl}\psi_{il})}{2}} & j\neq i\\
0 & \text{otherwise}
\end{cases}
\numberthis\label{eq:system_psi}
\end{align*}
\end{restatable}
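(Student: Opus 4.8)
The plan is to linearize the fixation probability around the neutral process, combining a reproductive-value martingale with a first-order perturbation in $\delta$. Let $\hat x(X)=\sum_{i\in X}\pi_i$, where $\pi$ solves \cref{eq:system_pi}; this system is the stationarity condition for the neutral random walk, so on a symmetric graph $\pi_i\propto\sum_{l}\Weight(i,l)$ and the detailed-balance relation $\pi_i p_{ij}=\pi_j p_{ji}$ holds. A short calculation then shows that under the neutral dynamics ($\delta=0$) the process $\hat x(\RandomConfiguration_t)$ is a martingale: the expected one-step change of $\hat x$ across any cut cancels by the symmetry of the edge weights. Since $\hat x(\emptyset)=0$ and $\hat x(V)=1$, optional stopping gives $\fp(G^S,0,\{u\})=\pi_u$, recovering $\fp(G^S,0)=\tfrac1n$. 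The value of this choice is that a neutral martingale has vanishing drift, which will isolate the first-order effect of the bias.

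First I would introduce the one-step drift $\Delta_\delta(X)=\Exp_\delta[\hat x(\RandomConfiguration_{t+1})-\hat x(\RandomConfiguration_t)\mid \RandomConfiguration_t=X]$ of $\hat x$ under bias $\delta$. Telescoping along the trajectory and applying optional stopping---valid since absorption occurs in finite expected time on the connected graph $G$ (polynomially bounded, cf.\ \cref{thm:approx_fix_prob})---yields $\fp(G^S,\delta,\{u\})=\pi_u+\Exp_\delta[\sum_{t<T}\Delta_\delta(\RandomConfiguration_t)]$, where $T$ is the absorption time. Differentiating at $\delta=0$, the contribution arising from the $\delta$-dependence of the trajectory law is multiplied by $\Delta_0\equiv 0$ and therefore drops out, leaving $\dfp=\tfrac1n\sum_u\Exp_0^{\{u\}}[\sum_{t\ge0}\Delta_0'(\RandomConfiguration_t)]$, with $\Delta_0'=\frac{\d}{\d\delta}\big|_0\Delta_\delta$ and the expectation taken under the neutral process. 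Rigorously justifying this interchange of differentiation with the random-length sum and expectation is one of the two delicate points.

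Next I would compute $\Delta_0'(X)$ explicitly. Differentiating the biased birth probabilities of \cref{eq:new_bias} at $\delta=0$ (only a dying biased node $u\in S$ contributes, and its copying weights tilt toward its $A$-neighbors) gives, after simplification, $\Delta_0'(X)=\tfrac1n\sum_{l}\lambda_l\,\pi_l\sum_{i,j}p_{li}p_{lj}\,\mathbf{1}[i\in X]\,\mathbf{1}[j\notin X]$. Applying detailed balance $\pi_l p_{li}=\pi_i p_{il}$ to the inner term and summing over $l$ converts $\sum_l\lambda_l\pi_l p_{li}p_{lj}$ into $\pi_i\sum_l\lambda_l p_{il}p_{lj}=\pi_i\,b_{ij}^{(2)}$, so $\Delta_0'(X)=\tfrac1n\sum_{i,j}\pi_i\,b_{ij}^{(2)}\,\mathbf{1}[i\in X]\,\mathbf{1}[j\notin X]$. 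Substituting into the formula from the previous step gives $\dfp=\tfrac1n\sum_{i,j}\pi_i\,b_{ij}^{(2)}\,H_{ij}$, where $H_{ij}=\tfrac1n\sum_u\Exp_0^{\{u\}}[\sum_{t\ge0}\mathbf{1}[i\in\RandomConfiguration_t]\,\mathbf{1}[j\notin\RandomConfiguration_t]]$ is the expected total time, averaged over the initial mutant, that $i$ carries $A$ while $j$ carries $B$.

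It remains to identify $H_{ij}=\psi_{ij}$, which is the crux. Here I would invoke the duality between the neutral Voter model and coalescing random walks: tracing ancestral lineages backward in time, node $i$ carries $A$ at time $t$ iff its lineage has reached the initial mutant, and each lineage takes a $p$-step precisely when its current location is the node chosen to die. Averaging over the uniform initial mutant collapses $\tfrac1n\sum_u\Pr^{\{u\}}[i\in\RandomConfiguration_t,\,j\notin\RandomConfiguration_t]$ to $\tfrac1n$ times the probability that the lineages from $i$ and $j$ have not yet coalesced by backward-time $t$; summing over $t$ yields $\tfrac1n$ times the expected coalescence time. A first-step analysis of the two-lineage chain (at each step a uniform node dies, so the pair is active with probability $2/n$) then produces the recursion $H_{ij}=\tfrac12+\tfrac12\sum_l(p_{il}H_{lj}+p_{jl}H_{il})$ for $i\ne j$ and $H_{ii}=0$, which is exactly \cref{eq:system_psi}; by uniqueness of the solution, $H_{ij}=\psi_{ij}$, completing the proof. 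The main obstacles are this duality/recursion bookkeeping---where the $1/n$ and factor-of-$2$ normalizations must be tracked precisely---together with the differentiation-under-the-sum justification in the second step.
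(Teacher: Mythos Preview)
Your proposal is correct and follows essentially the same architecture as the paper: use the reproductive-value function $\hat x(X)=\sum_{i\in X}\pi_i$ as a neutral martingale, compute the one-step drift, differentiate at $\delta=0$, apply detailed balance to rewrite the derivative as $\tfrac1n\sum_{i,j}\pi_i\,b_{ij}^{(2)}\,x_i(1-x_j)$, and then sum the expectation of this quantity over all times. The paper delegates the interchange of $\tfrac{\d}{\d\delta}$ with the time sum to a cited black box (Theorem~1 of McAvoy--Allen), whereas you sketch a self-contained telescoping/optional-stopping argument; your version is more transparent but, as you note, the dominated-convergence step needs care. The one genuine difference is in how the recursion \cref{eq:system_psi} for $\psi_{ij}$ is obtained: the paper defines $\psi_{ij}=\sum_{t\ge 0}\Pr[i\in\RandomConfiguration_t,\,j\notin\RandomConfiguration_t]$ and derives the recurrence by a direct forward one-step decomposition of $\Pr[Y_{ij}(t+1)]$ (no duality), while you pass to the coalescing-random-walk dual and do a first-step analysis of the meeting time. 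Both routes are valid; yours is more conceptual and immediately explains the factor $1/2$ (exactly one of the two lineages is active each step with probability $2/n$), while the paper's route is purely computational and avoids invoking duality. Note that your $H_{ij}$ and the paper's $\psi_{ij}$ are literally the same expected-occupation quantity, so you can also bypass the uniqueness appeal and simply show directly (as the paper does) that this quantity satisfies \cref{eq:system_psi}.
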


Although \cref{lem:weak_selection} might look somewhat arbitrary at first, the expressions in it have a natural interpretation. We provide this interpretation here, while we refer to the Appendix for the detailed proof.

The quantities~$\pi_i$ express the probability that the novel trait~$A$ fixates when~$\delta=0$ and the invasion starts from node~$i$, and are known to follow \cref{eq:system_pi}~\cite{allen2021fixation}. Since~$\delta=0$, the two traits~$A$ and~$B$ are indistinguishable by the Voter process. Note that, eventually, the whole population will adopt the initial trait of some node~$i$, which leads to~$\sum_{i}\pi_i=1$. The first part of \cref{eq:system_pi} expresses the fact that the current trait of node~$i$ can fixate by either (i)~node~$i$ not adopting a new trait in the current round, and having its trait fixating from the next round on (first term), or (ii)~node~$i$ propagating its trait to some neighbor~$j$ in the current round, and having~$j$'s trait fixate from that round on.

The quantities~$\psi_{ij}$ express the average time throughout the Voter process that nodes~$i$ and~$j$ spend carrying traits~$A$ and~$B$, respectively. First, note that if~$i=j$ (second case of \cref{eq:system_psi}), then clearly~$\psi_{ij}=0$, as~$i$ and~$j$ always have the same trait.
Now, focusing on the first case of \cref{eq:system_psi}, the term~$1$ in the numerator captures the case that the invasion starts at node~$i$, in which case indeed~$i$ and~$j$ carry traits $A$ and $B$ respectively. The second term in the numerator captures the evolution of~$\psi_{ij}$ throughout the process, and is obtained similarly to the~$\pi_i$'s above. Indeed, given a current configuration~$\Configuration$, in the next round~$i$ and~$j$ carry traits~$A$ and~$B$, respectively,
if either
(i)~node~$i$ adopts the trait of some node~$l$, while~$l$ and~$j$ have traits~$A$ and~$B$, respectively (term $p_{il}\psi_{lj}$), or
{(ii)~node~$j$ adopts the trait of some node~$l$, while~$i$ and~$l$ have traits~$A$ and~$B$, respectively (term $p_{lj}\psi_{il}$).}
The denominator $2$ in \cref{eq:system_psi} is a normalizing term.

Now we turn our attention to the expression 
\[
\dfp = \frac{1}{n}\sum_{i,j\in V}\pi_i b_{ij}^{(2)}\psi_{ij} = \frac{1}{n}\sum_{i\in V}\pi_i\sum_{j\in V} b_{ij}^{(2)} \psi_{ij} .
\]
Operationally, this expression can be interpreted as follows:~with probability~$\sfrac1n$, the invasion of~$A$ starts at node~$i$. Then the contribution of the bias~$\delta$ to the fixation of~$i$'s trait~$A$ is multiplicative to the baseline fixation probability~$\pi_i$ of~$i$ under~$\delta = 0$. The multiplicative factor~$\sum_{j\in V}b_{ij}^{(2)}\psi_{ij}$ can be understood by expanding~$b_{ij}^{(2)}$, thereby rewriting as
\[
\sum_{j\in V}b_{ij}^{(2)}\psi_{ij} = \sum_{l\in V} \lambda_{l} p_{il} \sum_{j\in V} p_{lj} \psi_{il}
\]
Node~$i$ benefits by carrying the resident trait~$A$ whenever a neighbor thereof, $l$, is chosen for death. Still, the bias~$\delta$ has an effect if and only if~$l\in \BiasedNodes$ (hence the factor~$\lambda_l$). Moreover, even when~$l \in \BiasedNodes$, the benefit of~$i$ is further proportional to the chance~$\psi_{ij}$ that~$i$ carries trait~$A$ while~$j$ carries trait~$B$ (summed over all neighbors~$j$ of~$l$), since, when~$j$ also carries trait~$A$, it cancels the advantage that~$i$ has due to the bias~$\delta$.

\begin{figure*}[h!]
\begin{minipage}[h!]{0.33\linewidth}
\captionsetup{labelformat=empty,skip=0pt}
k=10\%
\centering
\includegraphics[width=\textwidth]{\mypath 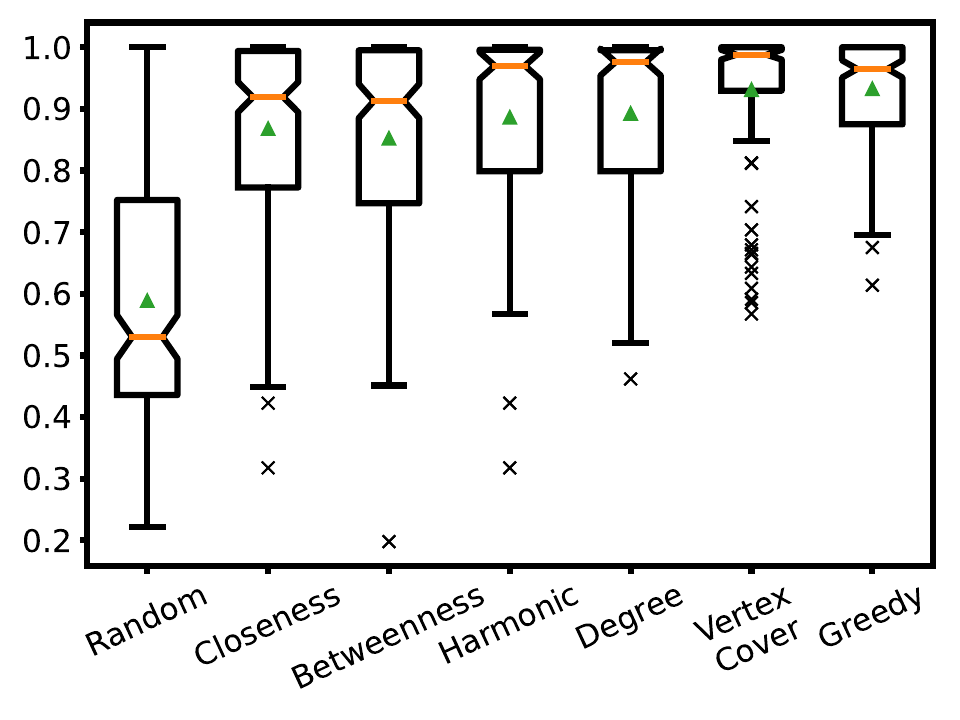}
\end{minipage}
\begin{minipage}[h!]{0.33\linewidth}
\captionsetup{labelformat=empty,skip=0pt}
k=30\%
\centering
\includegraphics[width=\textwidth]{\mypath 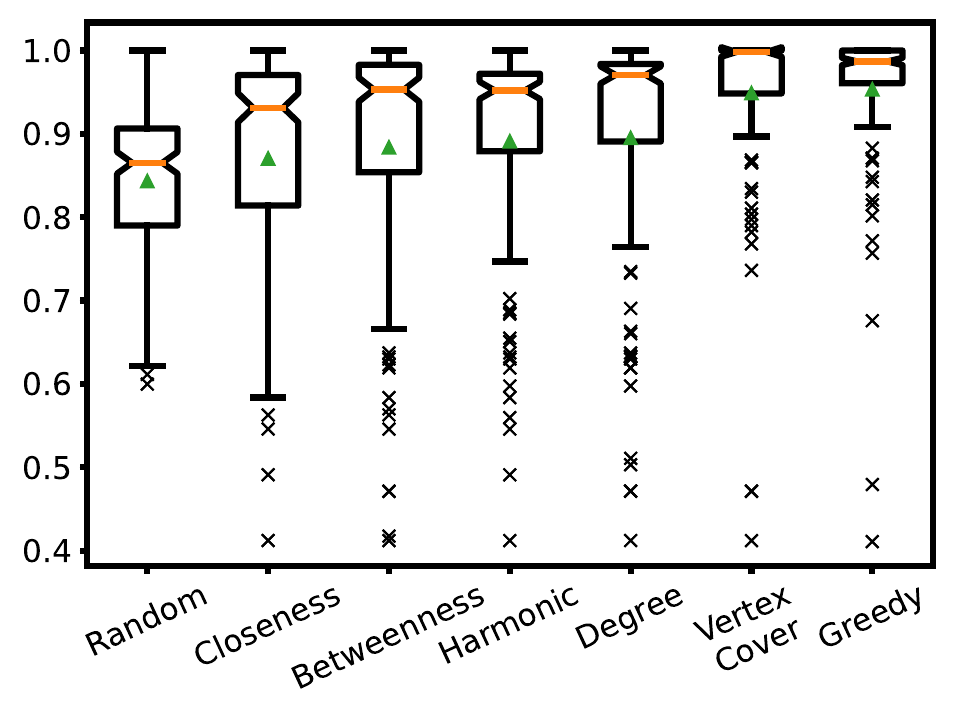}
\end{minipage}
\begin{minipage}[h!]{0.33\linewidth}
\captionsetup{labelformat=empty,skip=0pt}
k=50\%
\centering
\includegraphics[width=\textwidth]{\mypath 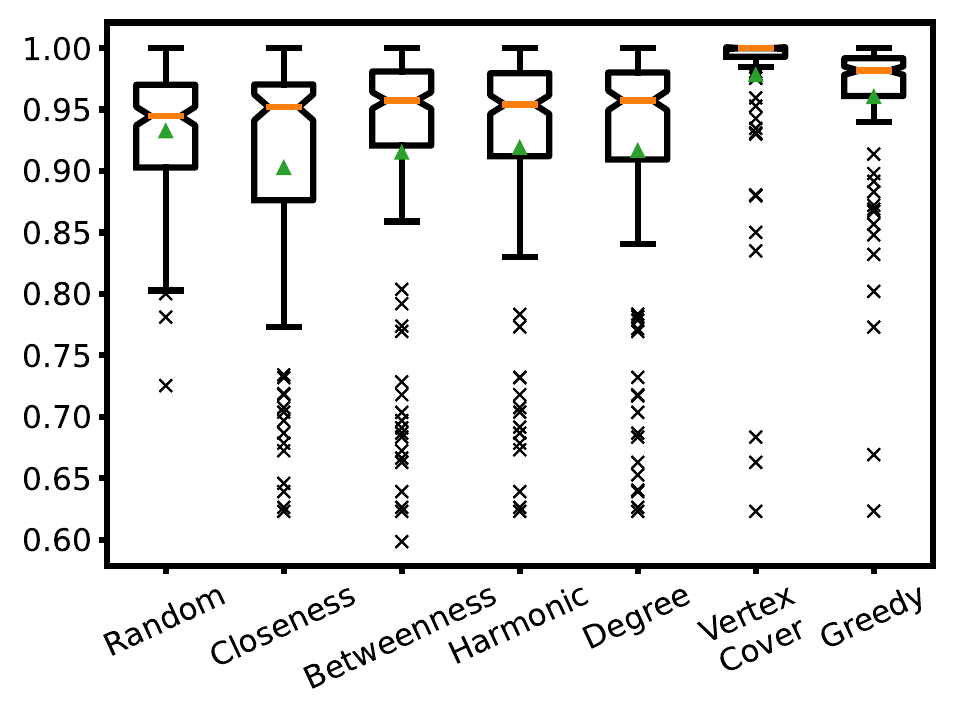}
\end{minipage}
\vspace{-2mm}
\caption{Performance under strong bias ($\delta \to \infty)$}\label{fig:exp_strong}
\vspace{-3mm}
\end{figure*}

\begin{figure*}[h!]
\begin{minipage}[h!]{0.33\linewidth}
\captionsetup{labelformat=empty,skip=0pt}
k=10\%
\centering
\includegraphics[width=\textwidth]{\mypath 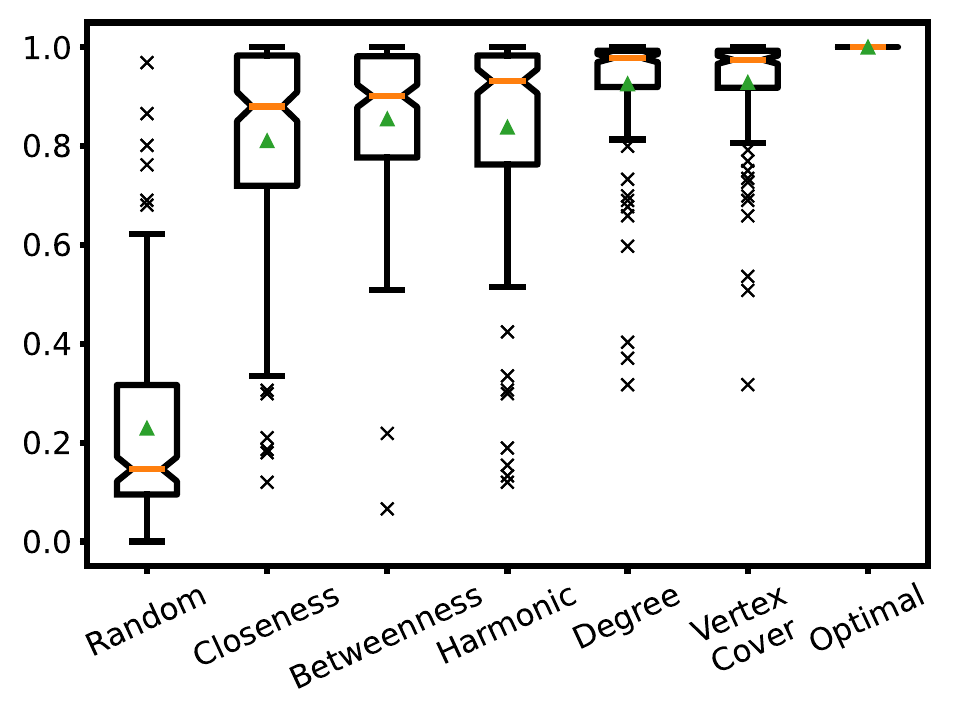}
\end{minipage}
\begin{minipage}[h!]{0.33\linewidth}
\captionsetup{labelformat=empty,skip=0pt}
k=30\%
\centering
\includegraphics[width=\textwidth]{\mypath 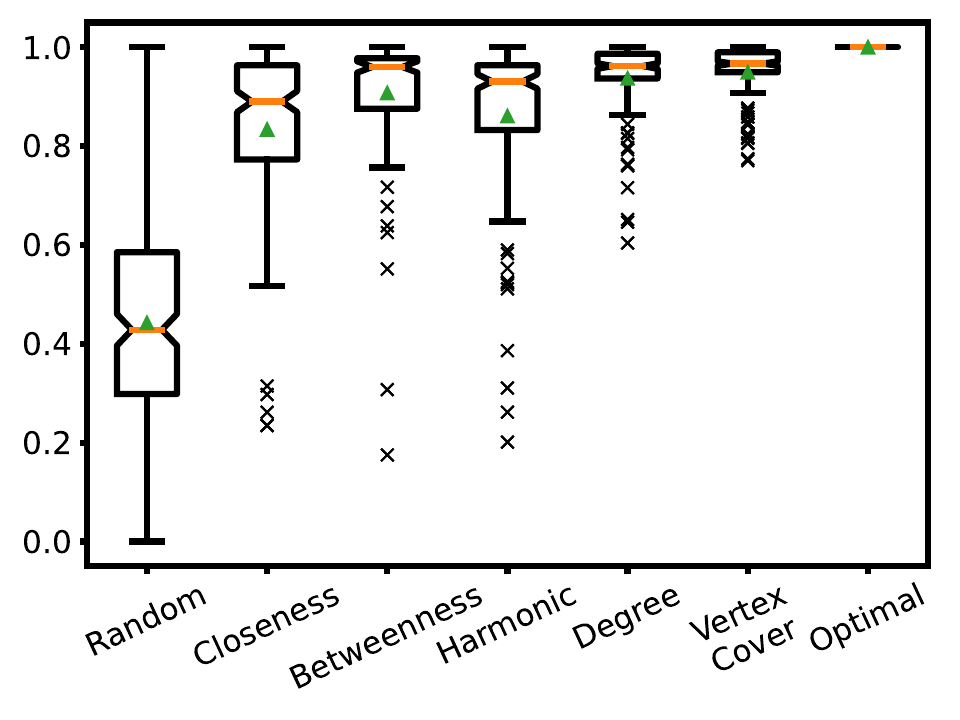}
\end{minipage}
\begin{minipage}[h!]{0.33\linewidth}
\captionsetup{labelformat=empty,skip=0pt}
k=50\%
\centering
\includegraphics[width=\textwidth]{\mypath 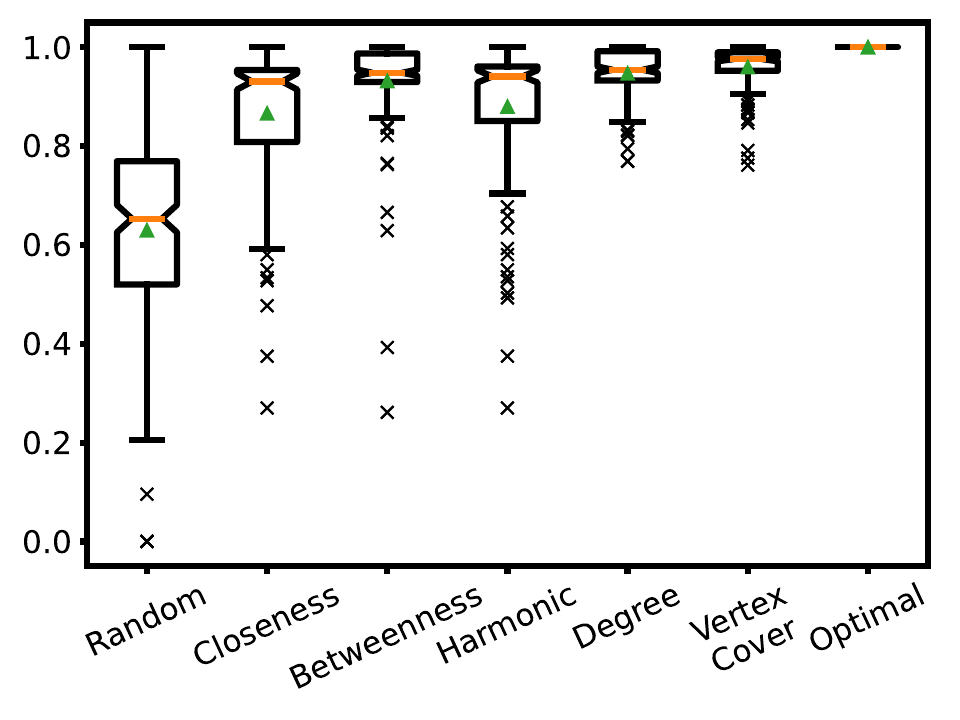}
\end{minipage}
\vspace{-3mm}
\caption{Performance under weak bias ($\delta \to 0)$}\label{fig:exp_weak}
\vspace{-3mm}
\end{figure*}

Having established \cref{lem:weak_selection}, we can now obtain our main result for the case of weak bias.

\begin{restatable}{theorem}{thmweakselection}\label{thm:weak_selection}
Given a symmetric graph~$G$ of~$n$ nodes, the maximization of~$\dfp$ can be done in~$\O(n^{2\omega})$ time, where~$\omega$ is the matrix-multiplication exponent.
\end{restatable}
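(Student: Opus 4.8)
The plan is to exploit the structure of the formula in \cref{lem:weak_selection} to show that~$\dfp$ is in fact a \emph{linear} function of the biased set~$\BiasedNodes$, which reduces the maximization in \cref{eq:maximization_weak} to a plain top-$k$ selection. First I would substitute the definition~$b_{ij}^{(2)} = \sum_{l\in V}\lambda_l\, p_{il}\, p_{lj}$ into the expression of \cref{lem:weak_selection} and interchange the order of summation, obtaining
\[
\dfp = \frac{1}{n}\sum_{i,j\in V}\pi_i\Bigl(\sum_{l\in V}\lambda_l\, p_{il}\,p_{lj}\Bigr)\psi_{ij} = \sum_{l\in V}\lambda_l\, c_l,
\]
where~$c_l := \frac{1}{n}\sum_{i,j\in V}\pi_i\, p_{il}\, p_{lj}\,\psi_{ij}$. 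The crucial observation is that neither the~$\pi_i$ nor the~$\psi_{ij}$ depend on~$\BiasedNodes$: their defining systems \cref{eq:system_pi,eq:system_psi} involve only the random-walk probabilities~$p_{ij}$ determined by~$G$, and not the indicators~$\lambda_l$. Consequently each coefficient~$c_l$ is a fixed quantity, and~$\dfp$ is linear in the indicator vector~$(\lambda_l)_{l\in V}$.

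Given this linearity, maximizing~$\dfp$ subject to~$|\BiasedNodes| = k$ amounts to maximizing~$\sum_{l\in V}\lambda_l\, c_l$ over~$0/1$ vectors with~$\sum_l \lambda_l = k$, which is solved \emph{exactly}, not merely approximately, by selecting the~$k$ nodes with the largest coefficients~$c_l$. (By the monotonicity of \cref{lem:monotonicity} the~$c_l$ are in fact nonnegative, so this also coincides with the optimum under the relaxed constraint~$|\BiasedNodes|\le k$.) Thus the entire optimization collapses to computing the~$c_l$ and taking the top~$k$.

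It remains to bound the running time. The weights~$\{\pi_i\}$ solve an~$n\times n$ linear system, computable in~$\O(n^\omega)$ time. The~$\{\psi_{ij}\}$ solve a linear system with one unknown per pair of nodes, i.e.\ $\Theta(n^2)$ unknowns; solving it by fast matrix inversion costs~$\O((n^2)^\omega) = \O(n^{2\omega})$. Once~$\pi$ and the matrix~$\Psi = (\psi_{ij})$ are available, I would compute all coefficients with a constant number of~$n\times n$ matrix products: writing~$P = (p_{ij})$, the product~$\Psi P^\top$ satisfies~$(\Psi P^\top)_{il} = \sum_j \psi_{ij}\, p_{lj}$, after which~$c_l = \frac{1}{n}\sum_i \pi_i\, p_{il}\,(\Psi P^\top)_{il}$ is an~$\O(n^2)$ entrywise contraction; this stage costs~$\O(n^\omega)$. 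Selecting the~$k$ largest~$c_l$ takes~$\O(n)$ time. The total is therefore dominated by solving the~$\psi$-system, yielding~$\O(n^{2\omega})$.

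The conceptual heart of the argument---and its only nontrivial step---is recognizing that the dependence of~$\dfp$ on~$\BiasedNodes$ enters \emph{solely} through the linear factor~$b_{ij}^{(2)}$, whereas~$\pi$ and~$\psi$ are evaluated at the neutral process~$\delta=0$ and are hence $\BiasedNodes$-agnostic; once this is in place the optimization is immediate. The main technical obstacle is thus not the optimization itself but the cost of the~$\psi$-system: I must confirm it is well-posed (admits a unique solution) and that~$\Theta(n^2)$ genuinely bounds the number of unknowns, so that the~$\O(n^{2\omega})$ estimate is attained rather than a larger one.
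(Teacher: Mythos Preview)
Your proposal is correct and essentially identical to the paper's proof: both unfold $b_{ij}^{(2)}$ to write $\dfp=\frac{1}{n}\sum_{l}\lambda_l\, h(l)$ with $h(l)=\sum_{i,j}\pi_i\,p_{il}\,p_{lj}\,\psi_{ij}$ independent of~$\BiasedNodes$, reduce the optimization to a top-$k$ selection on the~$h$-values, and attribute the $\O(n^{2\omega})$ cost to solving the $n^2$-variable system~\cref{eq:system_psi}. Your added remarks on computing the coefficients via matrix products and on nonnegativity of the~$c_l$ are fine but not needed for the stated bound.
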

\begin{proof}
We use standard algorithms to solve the linear systems of \cref{lem:weak_selection} and compute the quantities~$\pi_i$ and~$\psi_{ij}$. In particular, \cref{eq:system_pi} contains~$n$ unknowns and can be solved in~$\O(n^{\omega})$ time, while \cref{eq:system_psi} contains~$n^2$ unknowns and can be solved in~$\O(n^{2\omega})$. Then, by \cref{lem:weak_selection}, we have
\begin{align*}
&\dfp = \frac{1}{n}\sum_{i,j\in V}\pi_i \cdot b_{ij}^{(2)} \cdot\psi_{ij}=\frac{1}{n}\sum_{i\in V} \pi_i \sum_{j\in V}b_{ij}^{(2)} \psi_{il}=\\
&=\frac{1}{n}\sum_{i\in V} \pi_i \sum_{j\in V} \sum_{l\in V} \lambda_j p_{ij} p_{jl} \psi_{il}=\frac{1}{n}\sum_{j\in V} \lambda_{j} h(j),
\numberthis\label{eq:fp_derivative}
\end{align*}
where $h(j)=\sum_{i,l\in V}\pi_i p_{ij} p_{jl} \psi_{il}$. Hence, to maximize~$\dfp$, we compose~$\BiasedNodes$ out of the top-$k$ nodes by~$h(\cdot)$ value.
\end{proof}

\section{Experiments}\label{sec:experiments}

In this section we present experimental results for the proposed algorithms and additional heuristics. 
As datasets, we use~$100$ randomly-selected strongly connected components of real-life social  and community networks taken from~\cite{Netzschleuder}, with varying the number of nodes in the range~$[20,130]$.  
For each graph we use a budget~$k$ corresponding to the $10\%$, $30\%$ and $50\%$ of the graph's size. 
We evaluate 7 algorithms that are often used in a plethora of problems in network analysis.

\begin{compactenum}
\item Random: $k$ nodes uniformly at random.
\item Closeness: Top-$k$ nodes by closeness centrality.
\item Betweenness: Top-$k$ nodes by betweenness centrality.
\item Harmonic: Top-$k$ nodes by harmonic centrality.
\item Degree: Top-$k$ nodes by degree.
\item Vertex Cover: A greedy algorithm that seeks the set $S$ that maximizes the covered edges. The algorithm sequentially selects the node with the maximum marginal gain wrt the number of edges having at least one endpoint in $S$. This heuristic is motivated by \cref{lem:vertex_cover}, which states that vertex cover yields an optimal set under strong bias on regular graphs with self-loops.
\item Greedy: A greedy algorithm that seeks the set $S$ that maximizes the fixation probability. The algorithm sequentially selects the node with the maximal marginal gain wrt fixation probability. To evaluate the gain in each iteration we simulate the process many times to achieve high precision. This algorithm is motivated by \cref{thm:approx_fix_prob}, which indicates that the greedy algorithm provides an~$(1 - \sfrac1e)$-approximation guarantee on graphs with self-loops.
\end{compactenum}

Each algorithm examines different structural characteristics and dynamics. 
Random places the nodes uniformly all over the network. 
Closeness and Harmonic centralities consider the distance of a node to all other nodes. 
On the other hand, Betweenness Centrality indicates the importance of a node serving as a bridge, while Degree only considers the 1-hop neighbors. 
Contrariwise, Vertex Cover considers the collective edge coverage and not an individual score as the preceding strategies do. 
Lastly, the Greedy approach is the only one informed of the positional Voter process.
We compute and report the relative performance of each method, by dividing, for each network, its fixation probability by the maximum fixation probability achieved for that network.

\Paragraph{Strong bias ($\pmb{\delta \to \infty}$)}.
\cref{fig:exp_strong} illustrates the distribution of the normalized results over all graphs. 
We first observe that Greedy and Vertex Cover perform best for all size constraints and have similar behavior. 
This high performance of Greedy and Vertex Cover is expected given the theoretical guarantees from \cref{thm:strong_bias_approximation} and \cref{lem:vertex_cover}, respectively. For small $k$, the problem is quite more challenging, as Random performs poorly. As $k$ increases, a random selection strategy is likely to cover all the graph, and the precise selection is less important (for $k=50\%$ the median is $\geq 95\%$).

\Paragraph{Weak bias ($\pmb{\delta \to 0}$)}. 
By \cref{thm:weak_selection}, to maximize the fixation probability, we need to maximize its derivative~$\dfp$. 
We find the optimal value by solving the linear equation system described in \cref{lem:weak_selection}. 
\cref{fig:exp_weak} presents the normalized results over all graphs. 
We see that Degree and Vertex Cover perform better than centrality algorithms. 
Our intuition is that nodes with many neighbors provide good invasion hubs, as they propagate their trait more frequently. 
Lastly, the Random algorithm under weak selection is ineffective.

\section{Conclusion}

We introduced the \emph{positional Voter model}, which generalizes the standard Voter model to be more faithful to \emph{localized} effects that bias the invasion of novel traits. The new model raises the optimization problem of maximizing the fixation probability by distributing such effects in the network. 
A number of intricate theoretical questions remain open with respect to this optimization problem. 
Can we achieve efficient approximations, despite the lack of submodularity with finite bias~$\delta$? 
Can we obtain such approximations at least for the case of strong bias, under which submodularity holds only for graphs with self-loops? 
Does the tractability under weak bias extend to non-symmetric graphs (i.e., when generally~$\Weight(u,v)\neq\Weight(v,u)$?

\section*{Acknowledgments}
This work was supported in part by DFF (Project 9041-00382B) and Villum Fonden (Project VIL42117).

\bibliography{\mypath refs}
\appendix
\section{Appendix}\label{sec:app}

\thmapproxfixprob*
\begin{proof}
Consider an arbitrary configuration $\Configuration$ at time $t$, and define the potential function
$\Phi(\Configuration)=\sum_{u\in \Configuration} \InDegree(u)$.
Observe that generally $0\leq \Phi(\Configuration) \leq n^2$.
Consider the random variable $\Delta_t=\Phi(\RandomConfiguration_{t+1})-\Phi(\RandomConfiguration_{t})$,
while $\RandomConfiguration_t$ is realized at an arbitrary configuration $\Configuration$ with $\emptyset\subset \Configuration \subset V$.
For notational simplicity, we denote by $\Pr_{u\to v}$ that node $u$ propagates to node $v$ 

We first argue that $\Exp[\Delta_t]\geq 0$.
Indeed, given any configuration at time $t$, $\RandomConfiguration_t=\Configuration$, 
denote by 
\[
E_{AB}=\{(u,v)\in E\colon u\in \Configuration \text{ and } v\in V\setminus \Configuration \}\ .
\]
Observe that $\Delta_t\neq 0$ only if we have one of the two traits propagated along some edge in $E_{AB}$ in $\RandomConfiguration_{t+1}$.
In particular,
\begin{align*}
\Exp[\Delta_t] =& \sum_{(u,v)\in E_{AB}} \left(\Pr_{u\rightarrow v} \InDegree(v)  -
\Pr_{v\rightarrow u} \InDegree(u)\right)\\
 =& \sum_{(u,v)\in E_{AB}}\frac{1}{n}\left(\frac{\fit_{\Configuration}^{\BiasedNodes}(u|v) \InDegree(v)}{\sum\limits_{x\in \InNeighbors(v)}\fit_{\Configuration}^{\BiasedNodes}(x|v)}-\frac{ \InDegree(u)}{\sum\limits_{x\in \InNeighbors(u)}\fit_{\Configuration}^{\BiasedNodes}(x|u)}\right)
\end{align*}
Note that 
(i)~$\InDegree(u)\leq \sum_{x\in \InNeighbors(u)}\fit_{\Configuration}^{\BiasedNodes}(x|u)$, and
(ii)~$\fit_{\Configuration}^{\BiasedNodes}(u|v) \InDegree(v)\geq \sum_{x\in \InNeighbors(v)}\fit_{\Configuration}^{\BiasedNodes}(x|v)$,
which, by a simple substitution to the above expression for $\Exp[\Delta_t]$, yield $\Exp[\Delta_t]\geq 0$.

Next, we give a lower bound of $1/n$ on the variance of $\Delta(t)$.
Indeed, as long as the population is not homogeneous $\emptyset\subset \Configuration \subset V$, there exists at least one edge $e = (u, v) \in E_{\mathit{AB}}$.
Then with probability at least $\frac{1}{n}\frac{1}{\InDegree(u)}$, we have $\RandomConfiguration_{t+1}=\Configuration\setminus \{u\}$ and thus
$\Pr[\Delta_t\leq -\InDegree(u)]\geq \frac{1}{n \InDegree(u)}$.
Thus, 
\[
\Var(\Delta_t)\geq \Pr_{v\to u}(-\InDegree(u)-\Exp[\Delta_t])^2 \geq \frac{\InDegree(u)^2}{n \InDegree(u)} \geq \frac{1}{n}
\]
as $\InDegree(u)\geq 1$.

The potential function $\Phi$ thus gives rise to a submartingale with bound $B=n^2$.
Using standard martingale machinery of drift analysis~\cite{Kotz2019}
the re-scaled function $\Phi(\Phi-2B)+B^2$ satisfies the conditions
of the upper additive drift theorem with initial value at most
$B^2$ and step-wise drift at least $K=1/n$. 
The expected time until termination is thus at most:
\[
\ft(G^S,\FitAdv) \leq \frac{B^2}{K} = \frac{n^4}{1/n} = n^5
\]
The desired result follows.
\end{proof}

\lemweakselection*
\begin{proof}
Given a configuration $\Configuration\subseteq V$, for each node $i\in V$, we let $x_i=1$ if $i\in \Configuration$, and $x_i=0$ otherwise.
We write $\Phi(\Configuration)$ for the probability that $A$ fixates in $G$ starting from configuration $\Configuration$, when $\delta=0$.
In particular, $\Phi(\Configuration)=\sum_{i \in  V} \pi_i x_i$. 
We study how $\Phi(\Configuration)$ changes in one step of the Voter process when $\delta>0$.
We have
\begin{align*}
\Delta\Phi(X,\delta)&=\Exp[\Phi(\RandomConfiguration_{t+1})-\Phi(\RandomConfiguration_{t})|\RandomConfiguration_t=\Configuration]\\
&=\sum_{i,j\in V}e_{ij}(\Configuration,\delta)(x_i-x_j)\pi_j    
\end{align*}
where
\[
e_{ij}(\Configuration,\delta)=\frac{1}{n}\frac{\Weight(i,j)(1+\delta x_i\lambda_j)}{\sum\limits_{l\in  V}\Weight(l,j)(1+\delta x_l\lambda_j)}
\]
is the probability that node $j$ adopts the trait of node $i$ in the current round.
Expanding $\Delta\Phi(X,\delta)$, we obtain: 
\begin{align*}
&\Delta\Phi(\Configuration,\delta)=\sum_{i,j\in V}e_{ij}(\Configuration,\delta)(x_i-x_j)\pi_j\\
&=\sum_{i,j\in V}\left(\frac{1}{n}\frac{\Weight(i,j)(1+\delta x_i\lambda_j)}{\sum\limits_{l\in  V}\Weight(l,j)(1+\delta x_l\lambda_j)}\right)(x_i-x_j)\pi_j\\
&=\frac{1}{n}\sum_{j\in V}\pi_j\left(-x_j+\sum_{i\in V}x_i\frac{\Weight(i,j)(1+\delta x_i\lambda_j)}{\sum\limits_{l\in  V}\Weight(l,j)(1+\delta x_l\lambda_j)}\right)\\
&=\frac{1}{n}\sum_{i\in V}x_i\left(-\pi_i+\sum_{j\in V}\pi_j\frac{\Weight(i,j)(1+\delta x_i\lambda_j)}{\sum\limits_{l\in  V}\Weight(l,j)(1+\delta x_l\lambda_j)}\right)
\end{align*}

Using the symmetry of the weights $w_{ij}=w_{ji}$, we continue by differentiating $\Delta\Phi(X,\delta)$ with respect to $\delta$ as
$\Delta\Phi'(\Configuration,\delta)=\frac{1}{n}\sum_{i,j\in V}x_i\pi_j(C_1-C_2)$, where $C_1$ and $C_2$ are
\begin{align*}
C_1 &= \frac{\lambda_jx_i\Weight(i,j)}{\sum\limits_{l\in  V}\Weight(l,j)(1+\delta\lambda_j x_l)}\frac{\sum\limits_{l\in  V}\Weight(l,j)(1+\delta\lambda_j x_l)}{\sum\limits_{l\in  V}\Weight(l,j)(1+\delta\lambda_j x_l)}\\
&=\frac{\lambda_jx_i\Weight(j,i)}{\sum\limits_{l\in  V}\Weight(j,l)(1+\delta\lambda_j x_l)}\\
C_2&=\frac{(1+\delta\lambda_j x_i)\Weight(i,j)}{\sum\limits_{l\in  V}\Weight(l,j)(1+\delta\lambda_j x_l)}\frac{\sum\limits_{l\in  V}\Weight(l,j) x_l\lambda_j}{\sum\limits_{l\in  V}\Weight(l,j)(1+\delta\lambda_j x_l)}\\
&=\frac{(1+\delta\lambda_j x_i)\Weight(j,i)}{\sum\limits_{l\in  V}\Weight(j,l)(1+\delta\lambda_j x_l)}\frac{\sum\limits_{l\in  V}\Weight(j,l) x_l\lambda_j}{\sum\limits_{l\in  V}\Weight(j,l)(1+\delta\lambda_j x_l)}
\end{align*}
Setting $\delta=0$, and using the identity $x^2_i=x_i$ and the reversibility property $\pi_ip_{ij}=\pi_jp_{ji}$~\cite{mcavoy2021}, the quantities $C_1$ and $C_2$ become
\begin{align*}
&C_1 = \frac{\lambda_jx_i\Weight(j,i)}{\sum\limits_{l\in V}\Weight(j,l)}=\lambda_jx_ip_{ji}\\
&C_2 = \frac{\Weight(j,i)}{\sum\limits_{l\in  V}\Weight(j,l)}\frac{\sum\limits_{l\in  V}\Weight(j,l) x_l\lambda_j}{\sum\limits_{l\in  V}\Weight(j,l)}=\lambda_jp_{ji}\sum_{l\in V}p_{jl}x_l\\
\end{align*}
Thus the derivative value $\Delta\Phi'(X,0)$ is given by:
\begin{align*}
\Delta\Phi'(X,0)&=\frac{1}{n}\sum_{i,j\in V}x_i\pi_j(\lambda_jx_ip_{ji}-\lambda_jp_{ji}\sum_{l\in V}p_{jl}x_l)\\
&=\frac{1}{n}\sum_{i\in V}x_i\sum_{j\in V}\Big(\pi_jp_{ji}\lambda_j-\pi_jp_{ji}\lambda_j\sum\limits_{l\in  V}p_{jl} x_l\Big)\\
&=\frac{1}{n}\sum_{i\in V}x_i\sum_{j\in V}\Big(\pi_jp_{ji}\lambda_j\Big(1-\sum\limits_{l\in  V}p_{jl} x_l\Big)\Big)\\
&=\frac{1}{n}\sum_{i\in V}x_i\sum_{j\in V}\Big(\pi_jp_{ji}\lambda_j\Big(\sum\limits_{l\in  V}p_{jl}-\sum\limits_{l\in  V}p_{jl} x_l\Big)\Big)\\
&=\frac{1}{n}\sum_{i\in V}x_i\sum_{j\in V}\Big(\pi_ip_{ij}\lambda_j\Big(\sum\limits_{l\in  V}p_{jl}(1-x_l)\Big)\Big)\\
&=\frac{1}{n}\sum_{i\in V}x_i\pi_i\sum_{j,l\in V}\lambda_jp_{ij}p_{jl}(1-x_l)\\
&=\frac{1}{n}\sum_{i\in V}x_i\pi_i\sum_{j\in V}b_{ij}^{(2)}(1-x_j)\\
&=\frac{1}{n}\sum_{i,j\in  V} \pi_i x_i (1-x_j) b^{(2)}_{ij}
\end{align*}

For two nodes $i,j \in V$, we define a stochastic function
\[
Y_{ij}(t)=
\begin{cases}
1, &\text{ if } i\in \RandomConfiguration_t \text{ and } j\not \in \RandomConfiguration_t,\\
0, &\text{ otherwise}
\end{cases}
\]
We also use $\psi_{ij}=\sum_{t=0}^{\infty}\Pr[Y_{ij}(t)]$ to express the total expected time that $i$ and $j$ hold traits $A$ and $B$, respectively,
Using~\cite[Theorem~1]{mcavoy2021} we have
\begin{align*}
\dfp&=\sum_{t=0}^{\infty}\sum_{\Configuration\subseteq V}\Pr[\RandomConfiguration_t=\Configuration]\Delta\Phi'(X,0)\\
&=\frac{1}{n}\sum_{i,j\in V}\pi_i  b_{ij}^{(2)} \psi_{ij}
\end{align*}  

The above equation expresses the crucial property that we can compute $\dfp$ by computing the quantities $\psi_{ij}$.
Thus, it remains to argue that these quantities follow the linear system of \cref{eq:system_psi}.
Indeed, if $i=j$ then clearly $\psi_{ij}=0$. 
For $i\neq j$ the expected time $\psi_{ij}$ follows the recurrence
\begin{align*}
\psi_{ij}=\sum_{t=0}^{\infty}\Pr[Y_{ij}(t)]=\Pr[Y_{ij}(0)]+\sum_{t=0}^{\infty}\Pr[Y_{ij}(t+1)]
\end{align*}

Let 
$q_{ij}=e_{ij}(\Configuration,0)=\frac{1}{n} p_{ij}$
be the probability that $j$ adopts the trait of $i$ in one step under the neutral setting $\delta=0.$
Then, we can rewrite the second term using the recursive formula
\begin{align*}
\Pr[Y_{ij}(t+1)]&=\sum_{l\in  V}q_{li}\Pr[Y_{lj}(t)]+\sum_{l\in  V}q_{lj}\Pr[Y_{il}(t)]\\
&+\left(1-\sum_{l\in V}(q_{li}+q_{lj})\right) \Pr[Y_{ij}(t)]
\end{align*}
Summing over all times $t$, we have
\begin{align*}
\sum_{t=0}^{\infty}\Pr[Y_{ij}(t+1)]&=\sum_{t=0}^{\infty}\Bigg(\sum_{l\in  V}q_{li}\Pr[Y_{lj}(t)]+\sum_{l\in  V}q_{lj}\Pr[Y_{il}(t)]\\
&+\left(1-\sum_{l\in V}(q_{li}+q_{lj})\right) \Pr[Y_{ij}(t)]\Bigg)
\end{align*}
Distributing the sum over $t$, we have
\begin{align*}
\sum_{t=0}^{\infty}\Pr[Y_{ij}(t+1)]&=\sum_{l\in  V}q_{li}\sum_{t=0}^{\infty}\Pr[Y_{lj}(t)]+\sum_{l\in  V}q_{lj}\sum_{t=0}^{\infty}\Pr[Y_{il}(t)]\\
&+\left(1-\sum_{l\in V}(q_{li}+q_{lj})\right) \sum_{t=0}^{\infty}\Pr[Y_{ij}(t)]
\end{align*}
Note that $\sum_{t=0}^{\infty}\Pr[Y_{ij}(t+1)]= \sum_{t=0}^{\infty}\Pr[Y_{ij}(t)] - \Pr[Y_{ij}(0)]$.
Due to the random initialization of the novel trait $A$, we have $\Pr[Y_{ij}(0)]=1/n$.
Moreover, now each sum over $t$ goes from $0$ to $\infty$, and thus we can replace these sums with the corresponding quantities $\psi$.
We thus have
\begin{align*}
\psi_{ij}&=\frac{1}{n}+\sum_{l\in  V}q_{li}\psi_{lj}+
\sum_{l\in  V}q_{lj}\psi_{il}\\
&+\left(1-\sum_{l\in V}(q_{li}+q_{lj})\right)\psi_{ij}, 
\end{align*}
and by rearranging, $\psi_{ij}$ satisfy the recurrence
\begin{align*}
     \psi_{ij}=
    \begin{cases}
    {\frac{1/n+\sum_{l\in V}(q_{li}\psi_{lj}+q_{lj}\psi_{il})}{\sum_{l\in V}(q_{li}+q_{lj})}} & j\neq i\\
    0 & \text{otherwise}
\end{cases}
\end{align*}
or equivalently, the linear system of \cref{eq:system_psi}.
\end{proof}

\end{document}